\newtheorem{lem}{Lemma}
\newtheorem{ass}{Assumption}
\newtheorem{thm}{Theorem}
\newtheorem{rem}{Remark}
\newtheorem{cor}{Corollary}
\def\mb{\mathbf}
\def\mbb{\mathbb}
\def\mc{\mathcal}
\DeclareMathOperator*{\argmin}{argmin}
\begin{document}
	
	\title{ D-SVM over Networked Systems with Non-Ideal Linking Conditions 
	}
	\author{Mohammadreza Doostmohammadian        \and
	Alireza~Aghasi \and
	Houman~Zarrabi
	}
	
	\institute{ M.  Doostmohammadian \at
		Faculty of Mechanical Engineering, Semnan University, Semnan, Iran. \\
		Tel.: +98-23-31533429\\
		\email{doost@semnan.ac.ir}
		\and
		A.  Aghasi \at Electrical Engineering and Computer Science Department, Oregon  State University, USA. \\ 
		\email{alireza.aghasi@oregonstate.edu}
		\and
		H. Zarrabi \at Iran Telecom research Center (ITRC), Tehran, Iran.  \\ 
		\email{h.zarrabi@itrc.ac.ir}
	}
	
	\date{Received: date / Accepted: date}

	\maketitle
	
	\begin{abstract}
		This paper considers distributed optimization algorithms, with application in binary classification via distributed support-vector-machines (D-SVM) over multi-agent networks subject to some link nonlinearities. The agents solve a consensus-constraint distributed optimization cooperatively via continuous-time dynamics, while the links are subject to strongly sign-preserving odd nonlinear conditions. Logarithmic quantization and clipping (saturation) are two examples of such nonlinearities.  In contrast to existing literature that mostly considers ideal links and perfect information exchange over linear channels, we show how general sector-bounded models affect the convergence to the optimizer (i.e., the SVM classifier) over dynamic balanced directed networks. In general, any odd sector-bounded nonlinear mapping can be applied to our dynamics. The main challenge is to show that the proposed system dynamics always have one zero eigenvalue (associated with the consensus) and the other eigenvalues all have negative real parts. This is done by recalling arguments from \textit{matrix perturbation theory}. Then, the solution is shown to converge to the agreement state under certain conditions. For example, the gradient tracking (GT) step size is tighter than the linear case by factors related to the upper/lower sector bounds. To the best of our knowledge, no existing work in distributed optimization and learning literature considers non-ideal link conditions.      
		
		\keywords{SVM \and distributed optimization \and matrix perturbation theory \and log quantization \and clipping. }
	\end{abstract}
\maketitle

\section{Introduction}\label{sec_intro}
\textbf{Background:}
Support-vector-machine (SVM) is a supervised learning method used for classification, regression, and outlier detection \cite{cortes1995support,drucker1996support}. Consider a labelled set of given data points each belonging to one of the two classes/labels. The SVM classifies these data and decides which class a new data point belongs to. In the centralized setting, all the data points are available at a central processing node/unit which performs the classification task and finds the SVM parameters. In distributed/decentralized setting (referred to as D-SVM), the data points are distributed among some computing nodes. Every computing node has access to a portion of the data points and computes the associated SVM parameters for classification. Then shares the SVM information over the multi-agent network (that could be subject to nonlinearities) and updates the SVM parameters based on the received information over the network. In this way, the nodes compute the SVM classifiers collaboratively without sharing any data points. The existing works of literature are either (i) centralized and need all the data points at a central processor, or (ii) need the missing data points at all the computing nodes. Both cases require a lot of information sharing and long-distance communication to the central unit (case (i)) or other computing nodes (case (ii)). Also, these may raise privacy concerns due to sharing raw data over the network. The distributed setups (D-SVM), on the other hand, only share SVM data to be updated. Many existing D-SVM solutions consider a linear setup, where the information exchange among the nodes is ideal and linear. However, this data-sharing might be subject to non-ideal linking conditions (e.g., quantization or clipping).  This work takes these into consideration by modelling nonlinear linking among the nodes. In general, we assume the data received from one node from another node has gone through a nonlinear channel and this nonlinearity represents any non-ideal linking condition that may exist in the real-world. This gives the most general form of non-ideal linking and to our best knowledge, no existing work in the literature considers such a general solution. This motivates our paper for practical D-SVM and machine learning applications. 

\textbf{Literature Review:} This paper extends our previous results on consensus optimization and distributed learning in \cite{dsvm} to address some practical nonlinearities in real applications. For example, in many machine-learning applications, the data exchange over the network is in discrete-values \cite{ecc_mixed} and needs to be quantized \cite{magnusson2018communication}, or in swarm robotic networks the actuators might be subject to saturation \cite{ccta22}. These inherent unseen nonlinearities in the real optimization models may result in inaccuracy and unacceptable outcomes, degrading efficiency and increasing the operation costs \cite{ccta22}. On the other hand, in some applications, e.g., optimal resource allocation and consensus algorithms, sign-based nonlinearities are added for the purpose of robustness to impulsive noise \cite{stankovic2020nonlinear} or improving the convergence rate \cite{garg2019fixed2,my_ecc,ning2017distributed,rahili_ren}. 
Many other recent works address different constraints in the distributed optimization framework. For example, in contrast to primary linear models \cite{van2019distributed,nedic2017achieving,simonetto2017decentralized,akbari2015distributed,ling2013decentralized,wei_me_cdc22,koppel2018decentralized,xi2017add,forero2010consensus},  consensus optimization in a multiple leader containment as in opinion dynamics \cite{ecc_containment}, under communication constraint and server failure \cite{ecc_compres}, under mixed-integer value algorithms over computer networks \cite{ecc_mixed}, and under non-negative constraints on the states (non-negative orthant) \cite{ecc_nonnegative} are discussed recently.

\textbf{Contributions:} In this work, we assume that the links (over the multi-agent network) are subject to sector-bounded link nonlinear conditions, i.e., the sent information over the transmission links is not linearly delivered; for example, due to (log) quantization or clipping. We show that under strongly sign-preserving odd nonlinearity on the links, the exact convergence to the optimizer can be achieved. Some detailed examples of such nonlinear applications are given in our previous work on resource allocation \cite{ccta22}. Similar to \cite{van2019distributed,nedic2017achieving,simonetto2017decentralized,akbari2015distributed} we prove convergence over (possibly) dynamic networks via a hybrid setup based on matrix perturbation theory and algebraic graph theory. For general strictly convex cost functions, we prove convergence of the nonlinear version of our previously proposed protocol in \cite{dsvm} over general \textit{weight-balanced} undirected graphs, which advances the \textit{doubly-stochastic} assumption in some ADMM-based solutions \cite{ling2013decentralized,wei_me_cdc22}. Note that, in general, weight symmetry (and weight balancing) is easier to satisfy in contrast to weight-stochasticity. This is particularly of interest in packet-drop and link-failure scenarios \cite{icrom22_drop} which require weight-compensation algorithms for convergence \cite{6426252,cons_drop_siam} in the existing bi-stochastic network designs. In other words, the existing bi-stochastic networks need compensation algorithms as described in \cite{6426252,cons_drop_siam} to redesign the network weights after link removal, but the weight-symmetric condition in this work may still hold with no need for redesign algorithms, e.g., in the case of bidirectional link removal. This is discussed in detail for the distributed resource allocation problem in \cite{icrom22_drop}.

Moreover, we show that the gradient tracking property of our proposed GT-based dynamics holds under additive odd sign-preserving nonlinearities, which is another improvement over ADMM-based solutions. This further allows for, e.g., adding sign-based dynamics to reach faster convergence (in finite/fixed-time \cite{taes2020finite}) or to make the solution resilient to outliers, disturbances, and impulsive noise as in some consensus literature \cite{stankovic2020nonlinear}, but with the drawback of unwanted chattering due to non-Lipschitz continuity at the origin. To the best of our knowledge, such general resiliency to additive linking conditions is not addressed in the existing distributed optimization and learning literature. 

\textbf{Paper organization:} Section~\ref{sec_pre} recaps some preliminaries on algebraic graph theory and perturbation theory. Section~\ref{sec_prob} formulates the D-SVM as a consensus optimization problem. Section~\ref{sec_dyn} presents our nonlinear GT-based method with its convergence discussed in Section~\ref{sec_conv}. Section~\ref{sec_sim} gives a simulation on an illustrative example, and  Section~\ref{sec_con} concludes the paper.   

\section{Preliminaries} \label{sec_pre}
\subsection{Notations}
${\lVert A\rVert}$ denotes the norm operator on a matrix, i.e.,~$\lVert A\rVert= \sup_{\lVert \mb{x} \rVert =1}  \lVert A \mb{x} \rVert$. ${\lVert A\rVert_2}$ is the spectral norm defined as the square root of the max eigenvalue of $A^\top A$, and $\lVert A\rVert_{\infty} = \max_{1\leq i\leq n} \sum_{j=1}^n |a_{ij}|$ as the infinity norm. $\sigma(A)$ denotes the eigen-spectrum of $A$ and $\lambda$ denotes the eigenvalue. LHP and RHP respectively imply left-half-plane and right-half-plane in the complex domain.
$\partial_a z = \frac{dz}{da}$ denotes the derivative with respect to $a$.
$\mb{1}_n$ and~$\mb{0}_n$ are size~$n$ vectors of all~$1$'s and~$0$'s. `$;$' denotes column vector concatenation. $\nabla_{\mb{x}} F$ and $H:=\nabla_{\mb{x}}^2 F$ denote the first and second gradient of $F$ with respect to $\mb{x}$. Operators $\prec,\preceq,\succ,\succeq$ denote the element-wise version of $<,\leq,>,\geq$. The identity matrix of size $n$ is denoted by $I_n$.

\subsection{Recall on Algebraic Graph Theory}
We represent the multi-agent network by a strongly-connected undirected\footnote{In the case of linear and ideal links, the solution holds for strongly connected directed graphs (digraphs), see \cite{dsvm}.} graph~$\mc{G}$ with adjacency matrix $W=\{w_{ij}\}$. $w_{ij}$ is defined as the weight on the link $j\rightarrow i$ and zero otherwise. For an SC $\mc G$, matrix $W$ is irreducible. The Laplacian matrix~$\overline{W}=\{\overline{w}_{ij}\}$ is then defined as $\overline{w}_{ij}=w_{ij}$ for $i\neq j$ and $\overline{w}_{ij}=-\sum_{i=1}^n w_{ij}$ for $i=j$.
\begin{lem} \label{lem_sc}
	\cite{SensNets:Olfati04} The Laplacian $\overline{W}$ of a connected weigh-balanced graph
	has one isolated eigenvalue at zero and the rest on LHP.
\end{lem}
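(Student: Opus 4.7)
The plan is to attack this in two stages: first locate the spectrum of $\overline{W}$ in the closed left-half plane with a guaranteed zero eigenvalue coming from the Laplacian's row-sum structure, and then use the weight-balanced and connectedness hypotheses to sharpen ``closed LHP'' to ``open LHP except for a single simple zero.''

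First, by construction of the Laplacian, every row of $\overline{W}$ sums to zero, so $\overline{W}\mathbf{1}_n = \mathbf{0}_n$, giving $0 \in \sigma(\overline{W})$ with right eigenvector $\mathbf{1}_n$. The weight-balanced assumption additionally makes every column sum to zero, so $\mathbf{1}_n^\top \overline{W} = \mathbf{0}_n^\top$. A Gershgorin disk argument then confines $\sigma(\overline{W})$ to the closed LHP: the $i$-th disk is centered at $\overline{w}_{ii}=-\sum_{j\neq i} w_{ij}\le 0$ with radius $\sum_{j\neq i} w_{ij}$, so each disk is tangent to the imaginary axis at the origin.

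Next, I would rule out any other imaginary-axis eigenvalue by passing to the symmetric part $S := \tfrac{1}{2}(\overline{W} + \overline{W}^\top)$. Because the graph is weight-balanced, $S$ is itself the Laplacian of the undirected ``mirror'' graph on the same vertex set with edge weights $\tfrac{1}{2}(w_{ij}+w_{ji})$, and this mirror graph inherits connectedness from the original. Hence $S$ is symmetric negative semidefinite with one-dimensional kernel $\mathrm{span}(\mathbf{1}_n)$. If $\overline{W}v = \lambda v$ with $\mathrm{Re}(\lambda)=0$, then $v^{*} S v = \mathrm{Re}(\lambda)\,\lVert v\rVert^2 = 0$ forces $v\in\ker S=\mathrm{span}(\mathbf{1}_n)$, which in turn gives $\lambda = 0$.

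Finally, to upgrade simplicity of $0$ from the geometric to the algebraic sense (i.e., to exclude a Jordan block), I would argue as follows: if there were a generalized eigenvector $u$ with $\overline{W}u=\mathbf{1}_n$, then left-multiplying by $\mathbf{1}_n^\top$, which annihilates $\overline{W}$ since $\mathbf{1}_n^\top\overline{W}=\mathbf{0}_n^\top$, would give $0=\mathbf{1}_n^\top\mathbf{1}_n=n$, a contradiction. Thus $0$ is algebraically simple and all other eigenvalues lie strictly in the LHP. The main subtle point is getting this simplicity argument right: the weight-balanced hypothesis is essential, since without $\mathbf{1}_n^\top\overline{W}=\mathbf{0}_n^\top$ the Jordan-block exclusion breaks, and the connectedness hypothesis is what collapses $\ker S$ down to the one-dimensional span of $\mathbf{1}_n$.
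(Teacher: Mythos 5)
Your proof is correct and complete. Note, however, that the paper offers no proof of this lemma at all --- it is stated as a citation to the reference \cite{SensNets:Olfati04} (Olfati-Saber and Murray), so there is no in-paper argument to compare against. Your route --- Gershgorin for containment in the closed LHP, the symmetric part $S=\tfrac{1}{2}(\overline{W}+\overline{W}^\top)$ identified as the mirror-graph Laplacian to pin the kernel down to $\mathrm{span}(\mathbf{1}_n)$ and kill nonzero imaginary-axis eigenvalues, and the $\mathbf{1}_n^\top\overline{W}=\mathbf{0}_n^\top$ annihilation trick to exclude a Jordan block at $0$ --- is essentially the standard treatment of balanced digraphs in the cited source, where the mirror graph and the quadratic form $v^*Sv$ play exactly this role. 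The one place where you are doing more work than strictly necessary is the imaginary-axis step: your Gershgorin disks, each centered at $-r_i$ with radius $r_i$, already meet the imaginary axis only at the origin, so no nonzero purely imaginary eigenvalue can exist; the symmetric-part argument is genuinely needed only to force the geometric multiplicity of $0$ to be one (which Gershgorin cannot give you). An alternative, equally standard route for that last step is Perron--Frobenius applied to the nonnegative irreducible matrix $I_n+\epsilon\overline{W}$ for small $\epsilon>0$, which is closer to how the cited reference handles the strongly connected (not necessarily balanced) case. You also correctly flag that weight-balance is what makes $S$ a bona fide Laplacian (its diagonal matches $-\sum_{j\neq i}\tfrac{1}{2}(w_{ij}+w_{ji})$ only because in-weights equal out-weights) and what makes $\mathbf{1}_n^\top$ a left eigenvector; both uses are essential and your proof would be incomplete without them.
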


A graph is weight-balanced (WB) if the weight-sum of incoming and outgoing links at every node $i$ are equal, i.e.,~$\sum_{j=1}^n w_{ji} = \sum_{j=1}^n w_{ij}$. 
\begin{lem} \label{lem_laplacian}
	\cite{SensNets:Olfati04} Given a WB-graph, $W \mb{1}_n = \mb{1}_n^\top W$, and its Laplacian~$\overline{W}$ has the same left and right eigenvectors $\mb{1}_n^\top$ and~$\mb{1}_n$ associated with its zero eigenvalue, i.e.,~$\mb{1}_n^\top \overline{W}= \mb{0}_n$ and~$\overline{W}\mb{1}_n=\mb{0}_n$.
\end{lem}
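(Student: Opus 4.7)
The plan is to unpack the lemma into two assertions and verify each directly from the definitions. First, I would show the matrix identity $W\mathbf{1}_n = (\mathbf{1}_n^\top W)^\top$ (which is what the displayed formula must mean, since otherwise the two sides live in different spaces). The $i$-th entry of $W\mathbf{1}_n$ is the out-sum $\sum_{j=1}^n w_{ij}$, while the $i$-th entry of $\mathbf{1}_n^\top W$ is the in-sum $\sum_{j=1}^n w_{ji}$. The weight-balance hypothesis is exactly the entrywise equality of these two sums at every node, so the identity follows immediately.

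Next I would establish $\overline{W}\mathbf{1}_n = \mathbf{0}_n$. By the construction of $\overline{W}$ given in the excerpt, the diagonal entry is $\overline{w}_{ii} = -\sum_{j=1}^n w_{ij}$ (where the $j=i$ term vanishes or is absorbed), and the off-diagonals are $\overline{w}_{ij} = w_{ij}$. Summing the $i$-th row therefore gives $\overline{w}_{ii} + \sum_{j\neq i}\overline{w}_{ij} = -\sum_{j} w_{ij} + \sum_{j\neq i} w_{ij} = 0$, which is purely a consequence of how the Laplacian is defined and does not yet use weight-balance.

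The left-eigenvector claim $\mathbf{1}_n^\top \overline{W} = \mathbf{0}_n^\top$ is where weight-balance actually enters. The $j$-th column sum of $\overline{W}$ equals $\overline{w}_{jj} + \sum_{i\neq j}\overline{w}_{ij} = -\sum_{i} w_{ji} + \sum_{i\neq j} w_{ij}$. Using the WB identity from step one to replace $\sum_i w_{ji}$ by $\sum_i w_{ij}$, this column sum collapses to zero. Combining the two eigenvector identities with Lemma~\ref{lem_sc}, which guarantees that zero is a simple eigenvalue and that the remaining spectrum lies in the open LHP, confirms that $\mathbf{1}_n$ and $\mathbf{1}_n^\top$ are the unique (up to scaling) right and left eigenvectors attached to that zero eigenvalue.

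There is no real obstacle here: the proof is essentially bookkeeping on row versus column sums, with the only substantive input being the equivalence between the scalar WB condition $\sum_j w_{ji}=\sum_j w_{ij}$ and the matrix form $W\mathbf{1}_n=(\mathbf{1}_n^\top W)^\top$. The mild care needed is to keep the transpose straight in the displayed equation and to invoke Lemma~\ref{lem_sc} for the uniqueness (simplicity) of the zero eigenvalue, so that the eigenvectors exhibited are indeed \emph{the} left and right eigenvectors rather than just elements of the respective kernels.
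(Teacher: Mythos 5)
Your proof is correct. The paper offers no proof of this lemma at all --- it is imported by citation from Olfati-Saber and Murray --- and your row/column-sum bookkeeping (WB $\Leftrightarrow$ $W\mb{1}_n=(\mb{1}_n^\top W)^\top$, row sums of $\overline{W}$ vanish by construction, column sums vanish by WB, simplicity of the zero eigenvalue from Lemma~\ref{lem_sc}) is exactly the standard argument one would supply; your remark about the implicit transpose in the displayed identity is also apt.
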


\subsection{Auxiliary Results on Perturbation Theory} \label{sec_aux}
We recall some results of the matrix perturbation theory.
\begin{lem} \label{lem_dM} {\hspace{-0.005cm}~\cite{stewart_book,cai2012average}}
Assume an~$n$-by-$n$ matrix~$P(\alpha)$ which smoothly depends on a real parameter~${\alpha \geq 0}$, and has~${l<n}$ equal eigenvalues~$\lambda_1=\ldots=\lambda_l$, associated with linearly independent right and left unit eigenvectors~$\mb{v}_1,\ldots,\mb{v}_l$ and~$\mb{u}_1,\ldots,\mb{u}_l$. 
Denote by~$\lambda_i(\alpha)$, corresponding to~${\lambda_i, i \in \{1,\ldots,l\}}$, as the eigenvalues of~$P(\alpha)$ as a function of independent parameter~$\alpha$. Then,~$\partial_{\alpha} \lambda_i|_{\alpha=0}$ is the $i$-th eigenvalue of the following~$l$-by-$l$ matrix, and
\[\left(\begin{array}{ccc}
\mb{u}_1^\top P' \mb{v}_1 & \ldots & \mb{u}_1^\top P' \mb{v}_l \\
 & \ddots & \\
\mb{u}_l^\top P' \mb{v}_1 & \ldots & \mb{u}_l^\top P' \mb{v}_l 
\end{array} \right), ~ P' = \partial_{\alpha} P(\alpha)|_{\alpha=0}
\]  
\end{lem}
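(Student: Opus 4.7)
The plan is to reduce the $n$-dimensional eigenvalue problem for $P(\alpha)$ to an $l$-dimensional problem on the invariant subspace associated with the repeated eigenvalue $\lambda_1=\cdots=\lambda_l$, and then read off the first-order perturbations directly. Although individual perturbed eigenvectors of a multiple eigenvalue may fail to depend smoothly on $\alpha$, the total $l$-dimensional invariant subspace carried by the spectral projector does vary analytically, so first-order splitting information is obtained by projecting $P'$ onto that subspace.

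Concretely, I would first assemble the given right and left eigenvectors into $V=[\mathbf{v}_1\mid\cdots\mid\mathbf{v}_l]$ and $U=[\mathbf{u}_1\mid\cdots\mid\mathbf{u}_l]$, and rescale them, using linear independence together with the standard biorthogonality of left and right eigenvectors attached to distinct eigenvalues, so that $U^\top V=I_l$. Next, I would introduce the Riesz spectral projector $\Pi(\alpha)=\frac{1}{2\pi i}\oint_\Gamma (zI-P(\alpha))^{-1}\,dz$, with $\Gamma$ a small contour around $\lambda_1$ enclosing only the branch of eigenvalues $\{\lambda_1(\alpha),\ldots,\lambda_l(\alpha)\}$. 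Since $\lambda_1$ is isolated in $\sigma(P(0))$, $\Pi(\alpha)$ is analytic in $\alpha$ near the origin with $\Pi(0)=VU^\top$, and its range is precisely the invariant subspace whose eigenvalues form the branch of interest.

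The central step is to show that the compressed matrix $M(\alpha):=U^\top P(\alpha) V$ shares, to first order in $\alpha$, its spectrum with the restriction of $P(\alpha)$ to $\mathrm{range}(\Pi(\alpha))$. At $\alpha=0$ one has $M(0)=\lambda_1 I_l$, so each eigenvalue of $M(\alpha)$ is a zeroth-order match to some $\lambda_i(\alpha)$. Differentiating the intertwining relation $P(\alpha)\Pi(\alpha)=\Pi(\alpha)P(\alpha)$ at $\alpha=0$, the off-diagonal corrections generated by $\Pi'(0)$ cancel thanks to $P(0)V=\lambda_1 V$ and $U^\top P(0)=\lambda_1 U^\top$, leaving $M'(0)=U^\top P' V$ as the matrix whose eigenvalues are exactly $\{\partial_\alpha \lambda_i|_{\alpha=0}\}_{i=1}^l$.

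The main technical obstacle will be the degenerate case in which $M'(0)=U^\top P' V$ itself has repeated eigenvalues: the individual branches $\lambda_i(\alpha)$ may then admit only Puiseux-type expansions rather than classical Taylor expansions, so the notation $\partial_\alpha \lambda_i|_{\alpha=0}$ has to be read as the leading coefficient in a fractional-power expansion. This subtlety is resolved by the classical Lidskii theorem; for the purposes of this paper, however, the lemma is invoked only in the regime where $M'(0)$ has simple eigenvalues, so the elementary smooth-perturbation argument recorded in \cite{stewart_book,cai2012average} is enough.
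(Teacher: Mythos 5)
The paper does not actually prove this lemma: it is stated as a recalled auxiliary result and attributed to \cite{stewart_book,cai2012average}, so there is no in-paper argument to compare yours against. Taken on its own terms, your sketch follows the standard and correct route: biorthogonalize, pass to the analytic Riesz projector, compress $P(\alpha)$ to the $l$-dimensional invariant subspace, and observe that since the compressed matrix equals $\lambda_1 I_l$ at $\alpha=0$, the difference quotient $\tfrac{1}{\alpha}(M(\alpha)-\lambda_1 I_l)$ converges to $U^\top P' V$, after which continuity of the spectrum finishes the job. Two points deserve tightening. First, the invertibility of $U^\top V$ (needed for the normalization $U^\top V=I_l$) does not follow from biorthogonality across \emph{distinct} eigenvalues alone; it follows from semisimplicity of $\lambda_1$, i.e., $\ker(P(0)-\lambda_1 I)\oplus\mathrm{range}(P(0)-\lambda_1 I)=\mathbb{C}^n$, which is what the hypothesis of $l$ independent right and left eigenvectors for an $l$-fold eigenvalue is meant to encode. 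Relatedly, the lemma as literally stated uses ``unit'' eigenvectors; without the biorthogonal normalization the correct compressed matrix is $(U^\top V)^{-1}U^\top P' V$, a discrepancy the paper itself only acknowledges in a footnote to Theorem~\ref{thm_zeroeig} when it drops the normalization of $V_g$. Second, your closing claim that the lemma is invoked only where $U^\top P'V$ has simple eigenvalues is not accurate for this paper: in the proof of Theorem~\ref{thm_zeroeig} the compressed matrix $V^\top M_1 V_g$ is block lower triangular with an $m$-fold zero eigenvalue, so the repeated-eigenvalue (Lidskii/Puiseux) reading you flag is precisely the regime in use; fortunately only the signs of the real parts of the leading coefficients matter there, and those are still supplied by your argument.
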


\begin{lem} \label{lem_dbound} \cite[Theorem~39.1]{bhatia2007perturbation}
Let system matrix~${M(\alpha) = M_0 +\alpha M_1}$, with $\alpha$-independent matrices $M_0$ and $M_1$. Then, the optimal matching distance satisfies ${d(\sigma(M),\sigma(M_0))\leq 4(\lVert M_0\rVert+\lVert M\rVert)^{1-\frac{1}{nm}} \lVert \alpha M_1\rVert^{\frac{1}{nm}}}$ for~${\min_{\pi} \max_{1\leq i\leq 2nm} (\lambda_i - \lambda_{\pi(i)}(\alpha))}$ with~$\pi(i)$ as the~$i$th permutation over $2mn$ symbols.
\end{lem}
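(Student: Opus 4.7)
The plan is to follow a classical Bauer--Fike/Elsner-type route to the matching-distance bound. Set $N := nm$ for brevity and note that the perturbation $M - M_0 = \alpha M_1$ is linear in $\alpha$, so what is really needed is a perturbation bound that involves $\|M - M_0\|$, $\|M\|$ and $\|M_0\|$ in this specific algebraic form.

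First I would establish a one-sided localization. Fix any $\mu \in \sigma(M)$ with unit right eigenvector $x$, and observe that $(M_0 - \mu I)x = (M - \alpha M_1 - \mu I)x = -\alpha M_1 x$, so $\|(M_0 - \mu I)x\| \leq \|\alpha M_1\|$. Then I would invoke the determinantal inequality $|\det C| \leq \|Cx\|\cdot\|C\|^{N-1}$ (which follows from writing $|\det C| = \prod_i \sigma_i(C)$, bounding the top $N-1$ singular values by $\|C\|$ and the bottom one by $\|Cx\|$). Applied to $C = M_0 - \mu I$ this gives
$$\prod_{i=1}^N |\lambda_i(M_0) - \mu| = |\det(M_0 - \mu I)| \leq \|\alpha M_1\|\,\bigl(\|M_0\| + \|M\|\bigr)^{N-1},$$
where $|\mu| \leq \|M\|$ is used to bound $\|M_0 - \mu I\|$. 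Retaining only the smallest factor on the left and taking the $N$-th root yields
$$\min_i |\lambda_i(M_0) - \mu| \leq \bigl(\|M_0\| + \|M\|\bigr)^{1 - 1/N}\,\|\alpha M_1\|^{1/N}.$$
The symmetric argument, with $M$ and $M_0$ interchanged, yields the converse direction, so every eigenvalue of each matrix lies within the radius $\varepsilon := (\|M_0\| + \|M\|)^{1-1/N}\|\alpha M_1\|^{1/N}$ of an eigenvalue of the other.

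The main obstacle is the second step: upgrading these one-sided proximity statements into a genuine bijective pairing of $\sigma(M)$ with $\sigma(M_0)$, i.e., producing the permutation $\pi$ that realizes $d(\sigma(M),\sigma(M_0))$. The natural approach is a Hall-type matching argument on the bipartite graph whose edges join eigenvalues of $M$ and $M_0$ lying within a controlled multiple of $\varepsilon$. Verifying Hall's condition requires a careful tracking of clusters of nearby eigenvalues and an iterative reassignment of previously matched pairs; it is precisely this combinatorial step that inflates the one-sided constant $1$ into the universal constant $4$ in the stated bound, without spoiling the sharp exponent $1/N$. I expect this matching upgrade, rather than the spectral estimate itself, to be the genuinely delicate part of the proof, and it is also where the full strength of Bhatia's Chapter~VIII machinery is spent.
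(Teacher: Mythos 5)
The paper itself offers no proof of this lemma: it is imported verbatim as \cite[Theorem~39.1]{bhatia2007perturbation}, so the only meaningful comparison is with the standard proof in that reference. Your first step reproduces that proof's first half correctly and completely: the identity $(M_0-\mu I)x=-\alpha M_1 x$ for a unit eigenvector $x$ of $M$, the singular-value inequality $|\det C|\leq \lVert Cx\rVert\,\lVert C\rVert^{N-1}$, and the bounds $|\mu|\leq\lVert M\rVert$, $\lVert M_0-\mu I\rVert\leq\lVert M_0\rVert+\lVert M\rVert$ together give exactly Elsner's spectral-variation bound $\min_i|\lambda_i(M_0)-\mu|\leq(\lVert M_0\rVert+\lVert M\rVert)^{1-1/N}\lVert\alpha M_1\rVert^{1/N}$, and symmetrically in the other direction. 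No objection there.

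The gap is in your second step. A Hall-type matching built only from the two endpoint spectra and the proximity relation ``lies within a controlled multiple of $\varepsilon$'' cannot, even in principle, yield the optimal matching distance: the Hausdorff distance between two $N$-point multisets does not control their matching distance by any constant. Take $\sigma(M_0)=\{0,0,D\}$ and $\sigma(M)=\{0,D,D\}$ with $D$ large; the two-sided proximity statements you derive are satisfied with $\varepsilon=0$, yet the optimal matching distance is $D$, and Hall's condition fails on the multiset bipartite graph (two copies of $0$ on one side, one on the other). So no amount of ``iterative reassignment'' starting from the endpoint data alone can close the argument. What rules such configurations out, and what Bhatia's proof actually uses, is the homotopy $M(t)=M_0+t\,\alpha M_1$: the Elsner bound of your first step holds uniformly between any two points of this segment (with $\lVert\alpha M_1\rVert$ replaced by $|t-s|\,\lVert\alpha M_1\rVert$), and the continuity of eigenvalues along the path forces clusters of eigenvalues separated by more than a fixed multiple of $\varepsilon$ to retain their cardinalities as $t$ runs from $0$ to $1$; matching within clusters then produces the permutation $\pi$, and the cluster-diameter bookkeeping is where the constant $4$ comes from. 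Your proposal omits the path entirely and therefore cannot be completed as written; with the homotopy ingredient added, it becomes the standard proof.

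A minor remark on the statement rather than your argument: in the paper's application $M$ is $2nm\times 2nm$, so the exponent in the cited bound should consistently be $\tfrac{1}{2nm}$ rather than $\tfrac{1}{nm}$; your $N$ should be taken as the actual matrix dimension.
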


\section{Problem Statement} \label{sec_prob}
\subsection{Mathmatical Formulation} \label{sec_formul}
Following the notation in \cite{dsvm}, let~$\mb{x}= [\mb{x}_1;\mb{x}_2;\dots;\mb{x}_n] \in \mathbb{R}^{nm}$ represent the column state vector  and $\mb{x}_i \in \mathbb{R}^m$ represent the state at agent $i$. The following consensus-constrained distributed optimization problem is considered
\begin{align}\nonumber
\min_{\mb{x} \in \mathbb{R}^{nm}} &
 F(\mb x) = \sum_{i=1}^{n} f_i(\mb{x}_i)\\\label{eq_prob} 
 \text{subject to}& ~ \mb{x}_1 = \mb{x}_2 = \dots = \mb{x}_n.
\end{align}
with $f_i:\mathbb R^{m} \mapsto \mathbb R$ as the local cost function at agent/node $i$. Let $\nabla F:\mathbb R^{nm} \mapsto  \mathbb R^{nm}$ denote the gradient of global cost~${F:\mbb R^{nm} \mapsto  \mbb R}$. The idea in this work is to solve problem~\eqref{eq_prob} in a distributed way, i.e., locally over a multi-agent network. In this scenario, every agent only uses the local information in its neighbourhood (and its own information) to solve \eqref{eq_prob}.We assume that problem \eqref{eq_prob} is proper and well-defined and  there exists $\overline{\mb{x}}^* \in \mathbb{R}$ such that~${\nabla F(\mb{x}^*) = \mb{0}_{n}}$. 
Then, the optimizer of \eqref{eq_prob} is defined as~$\mb x^* := [\mb{x}^*_1;\mb{x}^*_2;\dots;\mb{x}^*_n] = \overline{\mb{x}}^* \mb{1}_n$ which satisfies ${\sum_{i=1}^{n} \partial_{x_i} f_i(\mb{x}^*_i) = 0}$. 

\begin{rem}
Other equivalent formulations of the consensus-constraint problem \eqref{eq_prob} are also discussed in the literature. In \cite{gharesifard2013distributed}, problem \eqref{eq_prob} is shown to be equivalent with general \textit{unconstrained} formulation,
\begin{align}\label{eq_prob_general} 
\min_{\mb{x} \in \mathbb{R}^{m}} &
 F(\mb x) = \sum_{i=1}^{n} f_i(\mb{x})
\end{align}
or its Laplacian-constraint version as,
\begin{align}\nonumber
\min_{\mb{x} \in \mathbb{R}^{nm}} &
 F(\mb x) = \sum_{i=1}^{n} f_i(\mb{x}_i)\\\label{eq_prob_laplac} 
 \text{subject to}& ~\mc{L}\mb{x} = \mb{0}_{nm}.
\end{align}
with $\mc{L} = L \otimes I_m$ and $L$ as the Laplacian of the weight-balanced graph $\mc{G}$. 
\end{rem}

\begin{ass} \label{ass_f(x)}
Each local cost~$f_i$ is strictly convex with smooth gradient and positive-definite (PD) $\nabla^2 f_i(\mb{x}_i)$. 
\end{ass} 

Under Assumption~\ref{ass_f(x)}, the unique optimizer of \eqref{eq_prob} is in the form $\mb{x}^* = \mb{1}_n \otimes \overline{\mb{x}}^*$ with $\overline{\mb{x}}^* \in \mathbb{R}^m$. 

\subsection{Application in D-SVM} \label{sec_dsvm}
We recall here a specific application of problem~\eqref{eq_prob} given in~\cite{dsvm} for distributed support-vector-machines (D-SVM). 
Consider binary classification using a given training set of data points~${\boldsymbol{\chi}_i \in \mathbb{R}^{m-1}}$, ${i=1,\ldots,N}$, each labelled by~${l_i \in \{-1,1\}}$. The standard (centralized) SVM problem finds the optimal max-margin classifier ${\boldsymbol{\omega}^\top \boldsymbol{\chi} - \nu =0}$ (as a hyperplane) to partition the data points into two classes in $\mathbb{R}^{m-1}$. The hyperplane parameters $\boldsymbol{\omega}~$ and~$\nu$ are optimal solution to the following minimization problem \cite{chapelle2007training}: 
\begin{align} \label{eq_svm_cent}
	\begin{aligned}
		\displaystyle
		& \min_{\boldsymbol{\omega},\nu}
		~ &  \boldsymbol{\omega}^\top \boldsymbol{\omega} + C \sum_{j=1}^{N} \max\{1-l_j( \boldsymbol{\omega}^\top \phi(\boldsymbol{\chi}_j)-\nu),0\}^p
	\end{aligned}
\end{align}
with~${p = \{1,2,\ldots\}}$ and $C$ respectively as the smoothness  and   margin-size parameters.
Given the parameters $\boldsymbol{\omega},\nu$ of the separating hyperplane, a test data point~$\widehat{\boldsymbol{\chi}}$ is labeled (classified) by~${g(\widehat{\boldsymbol{\chi}})= \text{sgn}(\boldsymbol{\omega}^\top \widehat{\boldsymbol{\chi}} - \nu)}$. The notions of nonlinear mapping~$\phi(\cdot)$
associated with a \textit{kernel function}~$K(\boldsymbol{\chi}_i,\boldsymbol{\chi}_j)=\phi(\boldsymbol{\chi}_i)^\top \phi(\boldsymbol{\chi}_j)$ can further be used to project the data points into a high-dimensional space~$\mc{F}$ in which the data points are linearly separable such that~${g(\widehat{\boldsymbol{\chi}})= \text{sgn}(\boldsymbol{\omega}^\top \phi(\widehat{\boldsymbol{\chi}}) - \nu)}$ determines the class of~$\widehat{\boldsymbol{\chi}}$. 
Recall that, as a standard convention, the non-differentiable function $\max\{z,0\}^p$ for~${p=1}$ in SVM formulation~\eqref{eq_svm_cent} is replaced by the twice differentiable~${L(z,\mu)=\frac{1}{\mu}\log (1+\exp(\mu z))}$ (hing loss model \cite{garg2019fixed2,slp_book}).  
It can be shown that by setting~$\mu$ large enough~${L(z,\mu)}$ becomes arbitrarily close to~$\max\{z,0\}$~\cite{slp_book,dsvm}.

In D-SVM setup, the data points are distributed among~$n$ agents each with~$N_i$ data points denoted  by~$\boldsymbol{\chi}^i_j,{j=1,\ldots,N_i}$. The trivial solution is that agents share some data points (the support vectors) locally which raises data-privacy concerns \cite{navia2006distributed}. The more recent idea is to find a distributed learning mechanism such that no agent reveals its local (support vector) data to any other agent or over non-private communication channels \cite{dsvm}. Each agent $i$ solves \eqref{eq_svm_cent} and finds its local classifier parameters $\boldsymbol{\omega}_i$ and~$\nu_i$ based on its partial dataset~$\boldsymbol{\chi}^i_j,{j=1,\ldots,N_i}$, which in general
differs from other agents. 
This is done by formulating the problem into the general distributed optimization format as given by \eqref{eq_prob}, 
\begin{align} \label{eq_svm_dist}
	\begin{aligned}
		\displaystyle
		 \min_{\boldsymbol{\omega}_1,\nu_1,\ldots,\boldsymbol{\omega}_n,\nu_n}
		\quad &  \sum_{i=1}^{n} \boldsymbol{\omega}_i^\top \boldsymbol{\omega}_i + C \sum_{j=1}^{N_i} \tfrac{1}{\mu}\log (1+\exp(\mu z)) \\
		 \text{subject to} \quad&  \boldsymbol{\omega}_1 = \dots = \boldsymbol{\omega}_n,\qquad\nu_1 = \dots =\nu_n, 
	\end{aligned}
\end{align} 
with hing loss function~${f_i:\mathbb{R}^m \mapsto \mbb R}$ locally defined at each agent $i$. The consensus constraint in \eqref{eq_svm_dist} ensures that agents learn (and reach an agreement on) the overall classifier parameters $\boldsymbol{\omega}$ and~$\nu$.

\section{Proposed Dynamics subject to Link Nonlinearity}\label{sec_dyn}
\subsection{The Linear Setup}
To solve problem~\eqref{eq_prob}, we previously proposed the following GT-based linear dynamics in \cite{dsvm}.
\begin{align} \label{eq_xdot}	 
	\dot{\mb{x}}_i &= -\sum_{j=1}^{n} w_{ij}^q(\mb{x}_i-\mb{x}_j)-\alpha \mb{y}_i, \\ \label{eq_ydot}
	\dot{\mb{y}}_i &= -\sum_{j=1}^{n} a_{ij}^q(\mb{y}_i-\mb{y}_j) + \partial_t  \nabla f_i(\mb{x}_i),
\end{align}
with~$\mb{x}_i(t)$ as the state of agent~$i$ at time~${t\geq 0}$, ${\mb{y}=[\mb{y}_1;\mb{y}_2;\dots;\mb{y}_n]\in\mbb{R}^{mn}}$ as the auxiliary variable tracking the sum of local gradients, and the matrix~${A_q=\{a_{ij}^q\}},{W_q=\{w_{ij}^q\}}$ as the WB adjacency matrices of~$\mc{G}_q$ ($q$ as possible switching signal), and~$\alpha>0$ as the GT step-size. The compact formulation of \eqref{eq_xdot}-\eqref{eq_ydot} is
\begin{align} \label{eq_xydot}
	\left(\begin{array}{c} \dot{\mb{x}} \\ \dot{\mb{y}} \end{array} \right) &= M(t,\alpha,q) \left(\begin{array}{c} {\mb{x}} \\ {\mb{y}} \end{array} \right),\\ \label{eq_M}
M(t,\alpha,q) &= \left(\begin{array}{cc} \overline{W}_q \otimes I_m & -\alpha I_{mn} \\ H(\overline{W}_q\otimes I_m) & \overline{A}_q \otimes I_m - \alpha H
\end{array} \right).
\end{align}
with Hessian matrix~$H:=\text{blockdiag}[\nabla^2 f_i(\mb{x}_i)]$.
Note that the linear networked dynamics~\eqref{eq_xydot}-\eqref{eq_M} and its nonlinear counterpart in the next section represent \textit{hybrid dynamical systems} because the matrix~$H$ (and parameter $\Xi(t)$ in the upcoming nonlinear case~\eqref{eq_xdot_g}-\eqref{eq_beta_M}) varies in continuous-time and the structure of the Laplacians $\overline{W}_q$ and~$\overline{A}_q$ may switch in discrete-time via a switching signal $q$ (as a dynamic network topology). 
We assume that this signal $q$ as the \textit{jump map} satisfies the assumptions for stability and regularity conditions as discussed in \cite{goebel2009hybrid,sam_tac:17}. We skip the details due to space limitations and refer interested readers to our previous work \cite[footnote~1]{dsvm} for more discussions on this. 

\subsection{The Nonlinear Setup}
Next, we consider possible nonlinearities at the linking between agents, denoted by nonlinear mapping $g:\mbb R^m \mapsto \mbb R^m $. Then, \eqref{eq_xdot}-\eqref{eq_ydot} is reformulated as\footnote{The problem can be extended to consider node nonlinearities (non-ideal agents) as $w_{ij}^q g(\mb{x}_i-\mb{x}_j)$ and $a_{ij}^q g(\mb{y}_i-\mb{y}_j)$ over undirected weight-symmetric networks. Examples of distributed resource allocation are given in \cite{ccta22}. }
\begin{align} \label{eq_xdot_g}	 
	\dot{\mb{x}}_i &= -\sum_{j=1}^{n} w_{ij}^q (g(\mb{x}_i)-g(\mb{x}_j))-\alpha \mb{y}_i, \\ \label{eq_ydot_g}
	\dot{\mb{y}}_i &= -\sum_{j=1}^{n} a_{ij}^q (g(\mb{y}_i)-g(\mb{y}_j) ) + \partial_t \nabla f_i(\mb{x}_i),
\end{align}
\begin{ass} \label{ass_g_sector}
	The nonlinear mapping $g(\cdot)$ is odd, sign-preserving, and monotonically non-decreasing. Further, $g$ is upper/lower sector-bounded by positive parameters $\kappa,\mc{K}>0$, i.e., $0<\kappa \leq \frac{g(z)}{z} \leq \mc{K}$, where the parameters $\kappa,\mc{K}$ satisfy the frequency condition in circle criterion\footnote{The circle criterion for absolute stability is as follows: given a nonlinear system in the form $\dot{\mb{x}} = A\mb{x} + \phi(\mb{v},t)$, $\mb{v} = C\mb{x}$ and conditions (i) $ \kappa \mb{v} \leq \phi(\mb{v},t) \leq \mc{K} \mb{v}$, $\forall \mb{v},t$; (ii) $det(i \omega I_n - A) \neq 0$, $\forall \omega \in \mathbb{R}^+ $ and $\exists k_0 \in [\kappa,\mc{K}]$, $A+k_0C$ is stable; (iii) $ \Re[(\mc{K}C(i \omega I_n - A)^{-1}-1)(1-\mc{K}C(i \omega I_n - A)^{-1})] < 0$, $\forall \omega \in \mathbb{R}^+ $ (known as the frequency condition), then $\exists c_1,c_2>0 $ such that for any solution of the system we have $ |\mb{x}(t)| \leq c_1 \exp{(-c_2t)} |\mb{x}(0)|$, $\forall t>0$.}.
\end{ass}
In compact form at \textit{every operating time} $t$, one can rewrite the dynamics  \eqref{eq_xdot_g}-\eqref{eq_ydot_g} as
\begin{align} \label{eq_xydot1}
	\left(\begin{array}{c} \dot{\mb{x}} \\ \dot{\mb{y}} \end{array} \right) = M_g(t,\alpha,q) \left(\begin{array}{c} {\mb{x}} \\ {\mb{y}} \end{array} \right),
\end{align} 
\begin{align} \label{eq_M_g}
M_g(t,\alpha,q) = \left(\begin{array}{cc} \overline{W}_{q,\Xi} \otimes I_m & -\alpha I_{mn} \\ H(\overline{W}_{q,\Xi}\otimes I_m) & \overline{A}_{q,\Xi} \otimes I_m - \alpha H
\end{array} \right).
\end{align}
where $M_g(t,\alpha,q)$ represents the linearization at every operating point $t$, which is also a function of the parameter $\alpha$ and a switching signal $q$ of the network (to be described later).
In the linear case as in \cite{dsvm} with $g(z)=z$, one can formulate the above as  $M = M^0 + \alpha M^1$ with 
\begin{eqnarray}\nonumber
	M^0 &=&   \left(\begin{array}{cc} \overline{W}_q \otimes I_m & \mb{0}_{mn\times mn} \\ H(\overline{W}_q \otimes I_m) & \overline{A}_q\otimes I_m \end{array} \right),\\\nonumber
	M^1 &=& \left(\begin{array}{cc} \mb{0}_{mn\times mn} & - {I_{mn}} \\ {\mb{0}_{mn\times mn}} & - H \end{array} \right),
\end{eqnarray}
Then, in the nonlinear case of this paper, one can rewrite the compact formulation of
\eqref{eq_xydot1}-\eqref{eq_M_g} as,
\begin{align}  \label{eq_Mg}
M_g(t,\alpha,q) &=   M_g^0 + \alpha M^1 \\
\kappa M^0 & \preceq M_g^0 \preceq \mc{K} M^0 \\ \label{eq_beta_M}
M_g^0 = \Xi(t)  M^0 &,~ \kappa I_n  \preceq \Xi(t) \preceq \mc{K} I_n 
\end{align}
with $\Xi(t) = \mbox{diag}[\xi(t) ]$, $\xi(t) = [\xi_1(t);\xi_2(t);\dots;\xi_n(t)]$, and $\xi_i(t) = \frac{g(\mb{x}_i)}{\mb{x}_i}$. Note that, from Assumption~\ref{ass_g_sector}, $\kappa \leq \xi_i(t) \leq \mc{K}$, and we have $\overline{W}_{q,\Xi} =  \overline{W}_{q} \Xi(t)$, $\overline{A}_{q,\Xi} =  \overline{A}_{q} \Xi(t)$ as one can write (with a slight abuse of notation) $g(\mb{x}(t)) = \Xi(t) \mb{x}(t)$. 
\begin{rem} \label{rem_lambda_xi}
    To relate $\sigma(\overline{W}_{q,\Xi})$ and  $\sigma(\overline{W}_{q})$, note that $\mbox{det}(\overline{W}_{q,\Xi}-\lambda I_{2mn}) = \mbox{det}(\overline{W}_{q}-\lambda \Xi(t)^{-1})$ since $\Xi(t)$ is a diagonal matrix. 
\end{rem}
In the rest of this paper, for notation simplicity, we drop the dependence on~$(t,\alpha,q)$ unless where needed.

\begin{ass} \label{ass_W}
		The network is connected and undirected (bidirectional links) with non-negative weights~${W_q=\{w_{ij}^q\}}$ and~${A_q=\{a_{ij}^q\}}$. Further,~$\sum_{j=1}^n w_{ij}^q<1$ and ~${\sum_{j=1}^n a_{ij}^q<1}$ with $q$ denoting the switching signal in hybrid mode.
\end{ass}


Assumptions~\ref{ass_g_sector}-\ref{ass_W} guarantee that under~\eqref{eq_xdot_g} and~\eqref{eq_ydot_g}: 
\begin{align}  \label{eq_sumydot}
	\sum_{i=1}^n \dot{\mb{y}}_i 
 = \sum_{i=1}^n \partial_t \nabla f_i(\mb{x}_i), ~~
	\sum_{i=1}^n \dot{\mb{x}}_i 
	= -\alpha \sum_{i=1}^n\mb{y}_i.
\end{align}
which can be proved similar to the proof of \cite[Lemma~3]{my_ecc}.
Initializing the auxiliary variable~$\mb{y}(0)=\mb{0}_{nm}$ it can be also seen that, similar to the linear case \eqref{eq_xdot}-\eqref{eq_ydot} \cite{dsvm}, we have
\begin{eqnarray} \label{eq_sumxdot2}
	\sum_{i=1}^n \dot{\mb{x}}_i = -\alpha \sum_{i=1}^n\mb{y}_i = -\alpha \sum_{i=1}^n \nabla f_i(\mb{x}_i),
\end{eqnarray}
and the time-derivative of $\sum_{i=1}^n \mb{x}_i$ (the sum of states) moves towards the summed gradient. Note that the same set of equations holds for linear case \eqref{eq_xdot_g}-\eqref{eq_ydot_g}. This follows from the structure of our GT-based protocol that allows for odd sign-preserving nonlinearities to be embedded in the model without disrupting the gradient tracking dynamics. This particularly is an improvement over existing ADMM-based dynamics, e.g., \cite{ling2013decentralized}.
Then, one can show that for any initialization~$\mb{x}(0) \notin \mbox{span} \{\mb{1}_n \otimes \varphi\}$ and~$\mb{y}(0)=\mb{0}_{nm}$, from~\eqref{eq_sumxdot2}, and due to the strict convexity of~$F(\mb{x})$, the following uniquely holds at~${\mb{x}=\mb{x}^*=\mb{1}_n \otimes \overline{ \mb{x}}^*}$,
\[\sum_{i=1}^n \dot{\mb{x}}_i = -\alpha (\mathbf 1_n^\top \otimes I_m) \nabla F(\mb{x}^*) =  \mb{0}_m. \]
Further, from~\eqref{eq_xdot} we have~$\dot{\mb{x}}_i = \mb{0}_m$ and from~\eqref{eq_ydot}, 
\[\dot{\mb{y}}_i = \partial_t \nabla f_i(\overline{ \mb{x}}^*) =  \nabla^2 f_i(\overline{ \mb{x}}^*) \dot{\mb{x}}_i = \mb{0}_m, 	\] 
which shows that the  state~$[\mb{x}^*;\mb{0}_{nm}]$ with $(\mathbf 1_n^\top \otimes I_m) \nabla F(\mb{x}^*) = \mb{0}_m$ is an invariant equilibrium point of both linear dynamics~\eqref{eq_xdot}-\eqref{eq_ydot} and nonlinear dynamics~\eqref{eq_xdot_g}-\eqref{eq_ydot_g} satisfying Assumption~\ref{ass_g_sector}.

\section{Convergence Analysis via Perturbation Theory} \label{sec_conv}
To prove the convergence to the optimizer $\mb{x}^*$ under the proposed GT-based nonlinear (hybrid) setup, we show the stability of the matrix~$M_g$ at every time-instant using the matrix perturbation theory~\cite{stewart_book}. Recall that the stability of a nonlinear system can be analyzed by addressing its stability at every operating point (over time) \cite{nonlin,Liu-nature}. In this direction, we specifically show that the algebraic multiplicity of zero eigenvalues of~$M_g$ is $m$ (at all times $t>0$) while the remaining eigenvalues are in the LHP. For this part, unlike existing literature with \textit{strong convexity} condition
\cite{van2019distributed,nedic2017achieving,ling2013decentralized,simonetto2017decentralized,akbari2015distributed,koppel2018decentralized}, we only assume strict convexity (Assumption~\ref{ass_f(x)}). We also show that the convergence rate of the solution and its associated Lyapunov function depend on the largest non-zero eigenvalue of $M_g$ in the LHP; and, following the continuity of the Lyapunov function at the jump points, the convergence over the entire (hybrid) time horizon can be proved \cite{nonlin,goebel2009hybrid}. 

\begin{thm} \label{thm_zeroeig}
Let Assumptions~\ref{ass_f(x)}-\ref{ass_W} hold. At every iteration point $ \forall t>0$, the matrix $M_g$ given by \eqref{eq_M_g} has only one set of zero eigenvalues with algebraic multiplicity $m$. 
\end{thm}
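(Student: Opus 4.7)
The plan is to write $M_g = M_g^0 + \alpha M^1$ as in \eqref{eq_Mg} and combine a direct kernel computation (giving a lower bound on the algebraic multiplicity of the zero eigenvalue) with a first-order matrix-perturbation analysis around $M_g^0$ (giving the matching upper bound). First, I would locate the zero eigenvalues of $M_g^0$. Using Remark~\ref{rem_lambda_xi} together with the similarity $\overline{W}_q\Xi \sim \Xi^{1/2}\overline{W}_q\Xi^{1/2}$, which is symmetric negative-semidefinite because $\overline{W}_q$ is the symmetric Laplacian of a connected undirected graph under Assumption~\ref{ass_W} and Lemma~\ref{lem_sc}, both $\overline{W}_{q,\Xi}$ and $\overline{A}_{q,\Xi}$ inherit a single simple zero eigenvalue with right null vector $\Xi^{-1}\mb{1}_n$ and left null vector $\mb{1}_n$, with every remaining eigenvalue strictly in the LHP. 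Block lower-triangularity of $M_g^0$ and the Kronecker factor $I_m$ then give $M_g^0$ a semisimple zero eigenvalue of algebraic multiplicity $2m$, with all other eigenvalues in the LHP.

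Next, I would compute $\ker M_g$ directly to bound the algebraic multiplicity from below. Setting $M_g[\mb{x};\mb{y}]=\mb{0}$ and substituting the first block row $(\overline{W}_{q,\Xi}\otimes I_m)\mb{x}=\alpha\mb{y}$ into the second collapses the Hessian terms and leaves $(\overline{A}_{q,\Xi}\otimes I_m)\mb{y}=\mb{0}$, so $\mb{y}\in \mathrm{span}\{\Xi^{-1}\mb{1}_n\otimes\mb{e}_k\}_{k=1}^m$. Testing the first block row against the left null vectors $\{\mb{1}_n\otimes\mb{e}_k\}$ of $\overline{W}_{q,\Xi}\otimes I_m$ produces the solvability condition $\alpha(\mb{1}_n^\top\Xi^{-1}\mb{1}_n)c_k=0$, and since $\mb{1}_n^\top\Xi^{-1}\mb{1}_n>0$ (because $\Xi\succ 0$ by Assumption~\ref{ass_g_sector}) this forces $\mb{y}=\mb{0}$. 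Hence $\ker M_g=\mathrm{span}\{[\Xi^{-1}\mb{1}_n\otimes\mb{e}_k;\mb{0}]\}_{k=1}^m$ has dimension $m$, so the algebraic multiplicity of zero is at least $m$.

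For the matching upper bound I would apply Lemma~\ref{lem_dM} to the $2m$-fold zero cluster of $M_g^0$ with right eigenvectors $v_k^{(1)}=[\Xi^{-1}\mb{1}_n\otimes \mb{e}_k;\mb{0}]$, $v_k^{(2)}=[\mb{0};\Xi^{-1}\mb{1}_n\otimes \mb{e}_k]$ and left eigenvectors $u_k^{(1)}=[\mb{1}_n\otimes \mb{e}_k;\mb{0}]$, $u_k^{(2)}=[-H(\mb{1}_n\otimes \mb{e}_k);\mb{1}_n\otimes \mb{e}_k]$. These are not biorthonormal, so after correcting by $(U^\top V)^{-1}$ the first-order derivative matrix $(U^\top V)^{-1}U^\top M^1 V$ reduces to an upper-block-triangular form with diagonal blocks $\mb{0}_{m\times m}$ and $-s^{-1}G$, where $s=\mb{1}_n^\top\Xi^{-1}\mb{1}_n>0$ and $G=\sum_{i=1}^n \xi_i(t)^{-1}\nabla^2 f_i(\mb{x}_i)$. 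Assumption~\ref{ass_f(x)} together with Assumption~\ref{ass_g_sector} gives $G\succ 0$, so $m$ of the $2m$ zero eigenvalues of $M_g^0$ pick up strictly negative first-order $\alpha$-derivatives and therefore leave the origin for $\alpha>0$. Combined with the kernel calculation this pins the algebraic multiplicity of zero in $M_g$ at exactly $m$.

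The main obstacle is the perturbation step: a naive application of Lemma~\ref{lem_dM} produces the matrix $U^\top M^1 V$ whose only nonzero block is the upper-right $-sI_m$, a nilpotent matrix with all zero eigenvalues, which spuriously suggests that no eigenvalue of $M_g$ departs from the origin to first order. Only after inverting the non-identity overlap $U^\top V$ — whose lower-left block $-G$ arises from the Hessian coupling carried by the $u_k^{(2)}$ vectors — does the positive-definite factor $-s^{-1}G$ surface to drive exactly $m$ eigenvalues strictly into the LHP. The strict convexity in Assumption~\ref{ass_f(x)} and the positive sector-bound $\kappa>0$ in Assumption~\ref{ass_g_sector} are precisely what is needed to make $G$ positive definite and thereby prevent a degenerate $\pm\sqrt{\alpha}$ splitting of the $2m$ zero eigenvalues.
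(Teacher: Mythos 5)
Your proposal is correct and its core is the same as the paper's: decompose $M_g=M_g^0+\alpha M^1$, observe that $M_g^0$ has a semisimple zero eigenvalue of multiplicity $2m$ with the rest of its spectrum in the LHP, and apply Lemma~\ref{lem_dM} to show that the perturbation $\alpha M^1$ drives exactly $m$ of those $2m$ eigenvalues into the LHP, the negativity ultimately coming from $\sum_{i=1}^n\nabla^2 f_i(\mb{x}_i)\succ 0$ under Assumption~\ref{ass_f(x)}. You depart from the paper in two ways worth recording. First, your direct computation of $\ker M_g$ is an ingredient the paper does not use: it pins the geometric multiplicity of the zero eigenvalue at exactly $m$ for every $\alpha>0$, which supplies the ``at least $m$'' half of the claim rigorously, whereas the paper's observation that $\partial_\alpha\lambda_{1,j}|_{\alpha=0}=0$ only says those eigenvalues do not move to first order. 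Second, you work with the exact null vectors of $M_g^0$ --- right vectors built from $\Xi^{-1}\mb{1}_n$ and left vectors of the form $[-H(\mb{1}_n\otimes\mb{e}_k);\mb{1}_n\otimes\mb{e}_k]$ --- and then correct for the nontrivial overlap $U^\top V$; the paper instead takes $\mb{1}_n$-based vectors on both sides (note that $[\mb{0}_{mn};\mb{1}_n\otimes\mb{e}_k]$ is not actually a left null vector of $M_g^0$ because of the $H(\overline{W}_{q,\Xi}\otimes I_m)$ block, and $\mb{1}_n$ is not a right null vector of $\overline{W}_{q,\Xi}=\overline{W}_q\Xi$ when $\Xi\neq I_n$). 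Your more careful bookkeeping changes the value of the nonzero first-order derivatives (you obtain the eigenvalues of $-(\mb{1}_n^\top\Xi^{-1}\mb{1}_n)^{-1}\sum_{i=1}^n\xi_i^{-1}\nabla^2 f_i$ rather than of $-\sum_{i=1}^n\nabla^2 f_i$) but not their sign, since both reduced matrices are block-triangular with diagonal blocks $\mb{0}_{m\times m}$ and a negative-definite matrix; so both arguments reach the same conclusion, with yours being the tighter one, and your remark that the uncorrected $U^\top M^1V$ is nilpotent makes explicit why the biorthonormalization step cannot be skipped.
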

\begin{proof}
The proof follows similar to \cite[Theorem~1]{dsvm} for linear dynamics~\eqref{eq_xdot}-\eqref{eq_ydot}. For the nonlinear case~\eqref{eq_xdot_g}-\eqref{eq_ydot_g} we have $M_g = M^0_g + \alpha M^1$, and the eigen-spectrum follows as
\begin{align} \label{eq_spect_k}
\kappa \sigma(M^0) \leq \sigma(M^0_g) \leq \mc{K} \sigma(M^0)
\end{align}
where $\sigma(M^0) = \sigma(\overline{W} \otimes I_m) \cup \sigma(\overline{A} \otimes I_m)$,
and, from Lemma~\ref{lem_sc}, matrix~$M^0$ has~$m$ eigenvalues associated with~$m$ dimensions of vector states~$\mb{x}_i$, i.e.\footnote{Note that, following Assumption~\ref{ass_W}, all the eigenvalues of the matrix $M^0$ are real. To generalize the proof for the case of mirror directed graphs (digraphs) and strongly connected balanced digraphs (see Corollary~\ref{cor_mirror}), in the proof we consider the general case of complex eigenvalues.},
$$\operatorname{Re}\{\lambda_{2n,j}\} \leq \ldots \leq \operatorname{Re}\{\lambda_{3,j}\} < \lambda_{2,j} = \lambda_{1,j} = 0,$$ 
where $j=\{1,\ldots,m\}$. 
We check the spectrum variation under \eqref{eq_spect_k} by adding the (small) perturbation~$\alpha M_1$. In simple words, we want to see whether the zero eigenvalues~$\lambda_{1,j}$ and~$\lambda_{2,j}$ and other LHP eigenvalues of $M_0$ move to the RHP or not (see Fig.~\ref{fig_perturb}).  Denote these perturbed eigenvalues by~$\lambda_{1,j}(\alpha,t)$ and~$\lambda_{2,j}(\alpha,t)$ which satisfy \eqref{eq_spect_k}.
Following the results in Section~\ref{sec_aux}, we find the right eigenvectors corresponding to~$\lambda_{1,j}$ and~$\lambda_{2,j}$ of $M^0$ as,
\begin{align} \nonumber
V = [V_1~V_2] =\frac{1}{\sqrt{n}} \left(\begin{array}{cc}
	\mb{1}_n& \mb{0}_n \\
	\mb{0}_n & \mb{1}_n 
\end{array} \right)\otimes I_m,
\end{align}
and the left eigenvectors as~$V^\top$ using Lemma~\ref{lem_laplacian}. Note that~$V^\top V=I_{2mn}$. Then, since $\Xi$ is a diagonal vector, the associated ones with $M_g^0$ are\footnote{Note that, since we only care about the sign of $\partial_\alpha \lambda_{1,j}$ and $\partial_\alpha \lambda_{2,j}$, the given eigenvectors are not necessarily unit vectors. Therefore, without loss of generality and to simplify the math derivation, we skipped the term 
that normalizes the eigenvectors. 
}
\begin{align} \nonumber
V_g = [V_{g1}~V_{g2}] =  \left(\begin{array}{cc}
	\mb{1}_n& \mb{0}_n \\
	\mb{0}_n &  \mb{1}_n 
\end{array} \right)\otimes I_m,
\end{align}
as the right eigenvectors and the same $V^\top$ as the left eigenvectors. This follows from Remark~\ref{rem_lambda_xi}.
From \eqref{eq_M_g},~$\partial_{\alpha} M_g(\alpha)|_{\alpha=0}=M_1$  with $M_1$ independent of the nonlinearity $g(\cdot)$. Then, from Lemma~\ref{lem_dM},

\small
\begin{eqnarray} \label{eq_dmalpha}
    		V^\top M_1 V_g= \left(\begin{array}{cc}
    \mb{0}_{m\times m}	& \mb{0}_{m\times m} \\
    \times 	& -(\mb{1}_n \otimes I_m)^\top H (\mb{1}_n \otimes I_m)
    \end{array} \right).
    \end{eqnarray} \normalsize
From Assumption~\ref{ass_f(x)} and \ref{ass_g_sector},
    \begin{equation} \label{eq_sum_df}
    	-(\mb{1}_n \otimes I_m)^\top H  ( \mb{1}_n \otimes I_m)= -\sum_{i=1}^n \nabla^2  f_i(\mb{x}_i) \prec 0,
    \end{equation}
    From Lemma~\ref{lem_dM}~$\partial_{\alpha} \lambda_{1,j}|_{\alpha=0}$ and~$\partial_{\alpha} \lambda_{2,j}|_{\alpha=0}$ depend on the eigenvalues of~$V^\top M_1 V_g$ in \eqref{eq_dmalpha}. This matrix is lower triangular with~$m$ zero eigenvalues and~$m$ negative eigenvalues (from~\eqref{eq_sum_df}). Therefore,~${\partial_{\alpha} \lambda_{1,j}|_{\alpha=0} = 0}$ and~${\partial_{\alpha} \lambda_{2,j}|_{\alpha=0}<0}$,
    i.e., the perturbation~$\alpha M_1$ moves the~$m$ zero eigenvalues~$\lambda_{2,j}(\alpha,t)$ of~$M_g$ toward the LHP while other $m$ zero eigenvalues $\lambda_{1,j}(\alpha,t)$ remain zero. This completes the proof.
\end{proof}

\begin{cor} \label{cor_eig}
    Note that Eq.~\eqref{eq_dmalpha}-\eqref{eq_sum_df} in the proof of Theorem~\ref{thm_zeroeig} are a function of the left and right eigenvectors of $\overline{W}_q$ and $\overline{A}_q$ (associated with the zero eigenvalue) and the $H$ matrix of the cost function. Given a PD $H$ matrix with $\nabla^2  f_i(\mb{x}_i) \succ 0$ (from Assumption~\ref{ass_f(x)}), any set of non-negative vectors $V^\top,V_g$ with $\sum_{i=1}^n v_i v_{g,i} > 0$ satisfies Eq.~\eqref{eq_sum_df}. Therefore, the stability results in Theorem~\ref{thm_zeroeig} hold for any Laplacian matrix whose (left-right) eigenvectors $V^\top,V_g$ satisfy $v_i v_{g,i} \geq 0,\forall i$ and $\sum_{i=1}^n v_i v_{g,i} > 0$. In case of homogeneous quadratic costs (i.e., $H = \gamma I$) this sufficient condition simplifies to only $\sum_{i=1}^n v_i v_{g,i} > 0$.  
\end{cor}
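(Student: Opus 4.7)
The plan is to revisit the proof of Theorem~\ref{thm_zeroeig} and isolate exactly where the structure of the zero-eigenvalue eigenvectors of $\overline{W}_q$ and $\overline{A}_q$ enters, then observe that the perturbation argument still closes under the weaker hypothesis stated in the corollary. First, in Theorem~\ref{thm_zeroeig} the specific choice $V_1=V_2=\tfrac{1}{\sqrt n}\mb{1}_n\otimes I_m$ was used only because Lemma~\ref{lem_laplacian} identifies $\mb{1}_n$ as the left and right null-vector of the WB Laplacian. For a general Laplacian whose zero-eigenvalue left/right eigenvectors are $V$ and $V_g$ (with components $v_i,v_{g,i}$), the block structure of $V^\top M_1 V_g$ in Eq.~\eqref{eq_dmalpha} is preserved: the top-left, top-right and bottom-left blocks are unchanged, and the bottom-right block generalizes from Eq.~\eqref{eq_sum_df} to $-\sum_{i=1}^n v_i v_{g,i}\,\nabla^2 f_i(\mb{x}_i)$.

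Second, I would verify the sign of this generalized block. Under Assumption~\ref{ass_f(x)} every Hessian $\nabla^2 f_i(\mb{x}_i)$ is symmetric positive definite, so any non-negative linear combination of them is positive semidefinite. The hypotheses $v_i v_{g,i}\ge 0$ for every $i$ together with $\sum_i v_i v_{g,i}>0$ guarantee that at least one coefficient is strictly positive, so the combination is in fact strictly positive definite and hence $-\sum_i v_i v_{g,i}\,\nabla^2 f_i(\mb{x}_i)\prec 0$, matching the conclusion of Eq.~\eqref{eq_sum_df}. From this point the perturbation argument of Theorem~\ref{thm_zeroeig} transfers verbatim: the matrix in Eq.~\eqref{eq_dmalpha} is lower triangular with $m$ zero eigenvalues on the upper block and $m$ strictly negative eigenvalues on the lower block, so Lemma~\ref{lem_dM} gives $\partial_{\alpha}\lambda_{1,j}|_{\alpha=0}=0$ and $\partial_{\alpha}\lambda_{2,j}|_{\alpha=0}<0$. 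Consequently $M_g=M_g^0+\alpha M^1$ retains exactly one cluster of $m$ zero eigenvalues while the remaining $m$ originally-zero eigenvalues move into the LHP, which is the desired stability conclusion.

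Finally, for the homogeneous quadratic special case $H=\gamma I_{mn}$ with $\gamma>0$, the bottom-right block collapses to $-\gamma\bigl(\sum_i v_i v_{g,i}\bigr)I_m$, so strict negative definiteness is equivalent to the single scalar condition $\sum_i v_i v_{g,i}>0$; the pointwise condition $v_i v_{g,i}\ge 0$ becomes redundant because there is no possibility of cancellation across heterogeneous Hessian blocks. The only delicate point in the whole argument is articulating cleanly that, in the non-homogeneous case, pointwise non-negativity of the weights $v_i v_{g,i}$ is what rules out hidden cancellations between the distinct $\nabla^2 f_i(\mb{x}_i)$ terms; this is precisely what the two conditions in the corollary encode, and no machinery beyond the proof of Theorem~\ref{thm_zeroeig} is required.
\qed
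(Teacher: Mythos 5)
Your proposal is correct and follows essentially the same route the paper intends: the corollary is proved by observing that the argument of Theorem~\ref{thm_zeroeig} only touches the zero-eigenvalue eigenvectors through the bottom-right block of Eq.~\eqref{eq_dmalpha}, which generalizes to $-\sum_{i=1}^n v_i v_{g,i}\nabla^2 f_i(\mb{x}_i)$, and this remains negative definite under the stated sign conditions (collapsing to the single scalar condition when $H=\gamma I$). Your elaboration of why pointwise non-negativity of $v_i v_{g,i}$ prevents cancellation across heterogeneous Hessian blocks is a faithful and slightly more explicit rendering of what the paper leaves implicit.
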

Some relevant discussions to Corollary~\ref{cor_eig} are given in \cite[Corollary~2]{SensNets:Olfati04}
on weighted-average consensus.
\begin{figure} [t]
		\centering
		\includegraphics[width=2.7in]{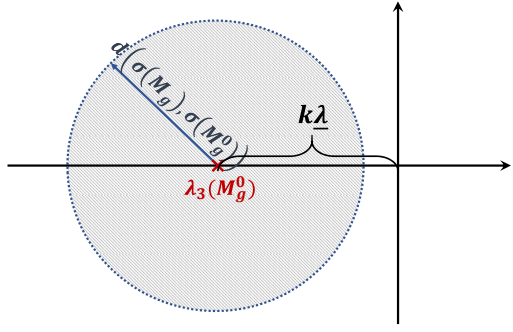}
		\caption{This figure illustrates the eigenvalue perturbation analysis in the proof of Theorem~\ref{thm_zeroeig} with $d(\sigma(M_g),\sigma(M_g^0))$ as the optimal matching distance and $\kappa \lambda_{3,j}(\alpha,t)$ as the maximum negative eigenvalue of $M_g^0$. For this example, we assume $\lambda_{3,j}(\alpha,t)$ is real-valued. }
		\label{fig_perturb}
		\vspace{-0.79cm}
\end{figure}
Next, the following finds the bound on $\alpha$ such that other eigenvalues of $M_g^0$ under perturbation $M_1$ remain on the LHP; this is better illustrated in Fig.~\ref{fig_perturb}.

In \cite{dsvm}, we found a bound on $\alpha$ for the linear case, which we recalculate here for the nonlinear case. Let~${\gamma = \max_{1\leq i\leq nm} \sum_{j=1}^{nm} |H_{ij}|}$ and ${\underline{\lambda}=\min_{1\leq j\leq m}|\operatorname{Re}\{\lambda_{3,j}\}|}$ with $\lambda_{3,j} \in \sigma(M^0)$. From \eqref{eq_beta_M} (and \eqref{eq_spect_k}), we have $\kappa \underline{\lambda} \leq |\operatorname{Re}\{\lambda_{3,j}(\alpha,t)\} |$, and for eigenvalues of $M_g^0$ to remain in the LHP we need $d(\sigma(M_g),\sigma(M^0_g)) < \kappa \underline{\lambda}$. Then, from Lemma~\ref{lem_dbound} and Fig.~\ref{fig_perturb}, the optimal matching distance $d(\sigma(M_g),\sigma(M^0_g)) < \kappa \underline{\lambda}$ (i.e., the real parts of the eigenvalues ${\operatorname{Re}\{\lambda_{3,j}(\alpha,t)\},\ldots,\operatorname{Re}\{\lambda_{2n,j}(\alpha,t)\}}$ of $M^0_g$ are negative) for~$0<\alpha<\overline{\alpha}_g$ with $\overline{\alpha}_g$ defined as
\begin{align} \nonumber
\overline{\alpha}_g &= \argmin_{\alpha>0} \big(| 4(2\mc{K}+2\gamma  \\
&+\max\{2\mc{K}+\gamma(2\mc{K}
+\alpha), 2\mc{K}+\alpha\}^{1-\frac{1}{nm}} \alpha^{\frac{1}{nm}}-\kappa \underline{\lambda}|\big), \label{eq_alphabar1}
\end{align}
for~$\gamma<1$ and,	
\begin{equation} \label{eq_alphabar2}
\overline{\alpha}_g = \argmin_{\alpha>0} |4(4\mc{K}+\gamma(4\mc{K}+\alpha))^{1-\frac{1}{nm}} (\alpha \gamma)^{\frac{1}{nm}}-\kappa \underline{\lambda}.|.
\end{equation}
for~$\gamma \geq 1$. These are defined based on $\lVert\cdot\rVert_\infty$ norm and Assumptions~\ref{ass_W} and~\ref{ass_f(x)} which gives $\lVert M_g^0 \rVert_\infty\leq 2\mc{K}(1+\gamma)$, $\lVert \alpha M^1 \rVert_\infty \leq \alpha \max\{ \gamma, 1\}$,
and thus, $\lVert M_g \rVert_\infty \leq \max\{2\mc{K}+\gamma(2\mc{K}+\alpha), 2\mc{K}+\alpha\}$.

Substituting spectral norm in Lemma~\ref{lem_dbound},  gives a more intuitive bound for the nonlinear case, which follows from 
\begin{align}
   4(\rVert M_0^g \lVert_2 + \rVert M^g \lVert_2)^{1-\frac{1}{nm}} \alpha^{\frac{1}{nm}} \rVert M_1\lVert_2^{\frac{1}{nm}} < \kappa \underline{\lambda},   
\end{align}
Assuming the admissible $\alpha$ such that the above holds, $\rVert M \lVert_2$ with respect to $\rVert M_0 \lVert_2$ is (at most) perturbed by $ \underline{\lambda}$, and from \eqref{eq_spect_k} we get
\begin{align} \nonumber
 4\mc{K}^{1-\frac{1}{nm}}(|\lambda_{2nm}| + |\lambda_{2nm}|+ \underline{\lambda})^{1-\frac{1}{nm}} \alpha^{\frac{1}{nm}}  \max\{1,\gamma\}^{\frac{1}{nm}} < \kappa \underline{\lambda}, 
\end{align}
which gives
\begin{align} \label{eq_alphag_spec_norm}
\overline{\alpha}_g \leq \frac{(\kappa \underline{\lambda})^{nm}}{4^{nm}\mc{K}^{nm-1}(2|\lambda_{2nm}| +\kappa \underline{\lambda})^{nm-1}   \max\{1,\gamma\}}
\end{align}
This gives an intuition on how the (sector-bound) ratio $\frac{\kappa}{\mc{K}}<1$ of the nonlinear mapping and the eigen-ratio $\frac{\underline{\lambda}}{|\lambda_{2nm}|}<1$  affect $\overline{\alpha}_g$ and the admissible range of ${\alpha}$. 
Under certain assumptions as in \cite{xi2017add}, one may find tighter bounds on the step-size $\alpha$. We first use the spectrum analysis in \cite[Appendix]{delay_est} to find the spectrum of $M_g(t,\alpha,q)$ in \eqref{eq_M_g}. By proper row/column permutations in \cite[Eq.~(18)]{delay_est}, $\sigma(M_g)$ follows from

\small \begin{align} \nonumber
\mbox{det}(&\alpha  I_{mn}) \mbox{det}(H(\overline{W}_{q,\Xi}\otimes I_m) +\\ &(\overline{A}_{q,\Xi} \otimes I_m - \alpha H -\lambda I_{mn}) (\frac{1}{\alpha})(\overline{W}_{q,\Xi} \otimes I_m -\lambda I_{mn})) = 0.
\end{align} \normalsize
We want to see whether any $\lambda \in \{\sigma(M^0_g)\backslash 0\}$ satisfies the above for an $\alpha$ value.
To simplify the analysis, we consider one-component state ($m=1$). Then,
\begin{align} \nonumber
    \det(\alpha H \overline{W}_{q,\Xi} &+ \overline{A}_{q,\Xi} \overline{W}_{q,\Xi}  - \alpha H \overline{W}_{q,\Xi}  - \lambda \overline{W}_{q,\Xi}   \nonumber\\
    &- \lambda \overline{A}_{q,\Xi} + \alpha \lambda H + \lambda^2 I_n) = 0
\end{align}
which simplifies to
\begin{align} \nonumber
 \det((\overline{W}_{q,\Xi}  -\lambda I_{n})(\overline{A}_{q,\Xi}  - \lambda I_{n}) +\alpha \lambda H ) = 0
\end{align}
Similar to the proof of Theorem~\ref{thm_zeroeig}, for stability we need to find the admissible range of  $\alpha$ values for which the eigenvalues $\sigma(M_g)$ remain in LHP, except one isolated zero eigenvalue.
We recall that the eigenvalues are continuous functions of the matrix elements~\cite{stewart_book}. Therefore, we find the roots of the above (say $\alpha=0$ and $\alpha = \overline{\alpha}_g$) for which we have more than one zero eigenvalue.  
Note that, for $\alpha = 0$, $\mbox{det}((\overline{W}_{q,\Xi}  - \lambda I_{n})(\overline{A}_{q,\Xi}  - \lambda I_{n})) = 0$ gives the eigen-spectrum as $\sigma(\overline{A}_{q,\Xi}) \cup \sigma(\overline{W}_{q,\Xi})$ with two zero eigenvalues. To find $\overline{\alpha}_g$, 
let assume the same adjacency matrix in \eqref{eq_xdot_g} and \eqref{eq_ydot_g} to simplify the problem (i.e., $w^q_{ij} = a^q_{ij}$). Then,
\begin{align} \nonumber
 \mbox{det}((\overline{W}_{q,\Xi}  - \lambda I_{n})^2 + \alpha \lambda H ) = 0
\end{align}
Note that the first term gives the eigenvalues of $\overline{W}_{q,\Xi}$ (in the LHP). Since $\alpha H$ is a diagonal matrix one can re-write the above (with some abuse of notation) as
\begin{align} 
 \mbox{det}((\overline{W}_{q,\Xi}  - \lambda I_{n} + \sqrt{\alpha |\lambda| H} )(\overline{W}_{q}  - \lambda I_{n} - \sqrt{\alpha |\lambda| H})) = 0
\end{align}
which simplifies to
\begin{align} 
 \mbox{det}(\overline{W}_{q,\Xi}  - \lambda(I_{n} \pm \sqrt{\frac{\alpha  H}{|\lambda|}} )) = 0
\end{align}
This simply means that the eigenvalues are perturbed towards the RHP by the factor of $\sqrt{\frac{\alpha  H}{|\lambda|}}$ (or adding the value $\sqrt{\alpha |\lambda| H}$). Therefore, the other root value follows from $\lambda I_{n} = \sqrt{\alpha |\lambda| H}$. 
In the linear case, one can claim that $\alpha_g = \argmin_{\alpha} |1 - \sqrt{\frac{\alpha  H}{|\lambda|}}| \geq \frac{\min \{|\lambda|\neq 0\}}{\max \{\partial^2 f_i\}} = \frac{|\lambda_2|}{\gamma}$ for $H \preceq \gamma I_{n}$ with $\lambda_2$ as the second largest eigenvalue of $M^0$. For the nonlinear case, recalling $\overline{W}_{q,\Xi} =  \overline{W}_{q} \Xi(t)$,
\begin{align} 
 \mbox{det}((\overline{W}_{q}  - \lambda I_{n} \Xi(t)^{-1} \pm \sqrt{\alpha |\lambda| H}\Xi(t)^{-1} )) = 0
\end{align}
which gives $\alpha_g \geq \min \{\frac{\kappa|\lambda_2|}{\gamma},\frac{|\lambda_2|}{\mc{K}\gamma}\}$.
This implies that, from perturbation analysis as in the proof of Theorem~\ref{thm_zeroeig}, one can claim for max bound on $\overline{\alpha}_g$ as
\begin{align} \label{eq_alphag_spec_norm1}
\overline{\alpha}_g  \leq \min \{\frac{\kappa|\lambda_2|}{\gamma},\frac{|\lambda_2|}{\mc{K}\gamma}\}
\end{align}
other $n-1$ eigenvalues of $M_g$ are in the LHP. This gives the admissible range to ensure stability\footnote{Note that both Eq.~\eqref{eq_alphag_spec_norm} and ~\eqref{eq_alphag_spec_norm1} give a bound on $\alpha_g$, but based on two different approaches. One can choose the one that gives an easier bound to satisfy the convergence. }. This is used in the next theorem.

\begin{thm} \label{thm_zeroeig2}
Let Assumptions~\ref{ass_f(x)}-\ref{ass_W} hold. For sufficiently small~$0 < \alpha_g <\overline{\alpha}_g$, the proposed nonlinear dynamics~\eqref{eq_xdot_g}-\eqref{eq_ydot_g} converges to~$[\mb{x}^*;\mb{0}_{nm}]$ with optimizer~$\mb{x}^*$ as the solution to~\eqref{eq_prob}. 
\end{thm}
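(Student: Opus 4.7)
The plan is to couple the spectral analysis already established---Theorem~\ref{thm_zeroeig} plus the admissibility bound $\alpha<\overline{\alpha}_g$ in \eqref{eq_alphag_spec_norm}--\eqref{eq_alphag_spec_norm1}---with a Lyapunov argument on the hybrid trajectory, in the spirit of \cite{dsvm,goebel2009hybrid,sam_tac:17}. Together these give, at every operating time $t$ and every switching mode $q$, exactly $m$ zero eigenvalues of $M_g(t,\alpha,q)$ lying on the consensus subspace $\mathrm{span}\{V_1\}$, and all remaining eigenvalues in the open LHP with $\operatorname{Re}\{\lambda\}\leq -\delta$ uniformly over the admissible sector family $\{\Xi(t):\kappa I\preceq \Xi(t)\preceq \mc{K} I\}$; uniformity is inherited because $\overline{\alpha}_g$ is defined using the worst-case sector bounds $\kappa,\mc{K}$.

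For the consensus direction, \eqref{eq_sumxdot2} together with the initialization $\mb{y}(0)=\mb{0}_{nm}$ reduces to $(\mb{1}_n^\top\otimes I_m)\dot{\mb{x}}=-\alpha\sum_{i=1}^n \nabla f_i(\mb{x}_i)$; restricted to $\mathrm{span}\{\mb{1}_n\otimes I_m\}$ this is gradient descent on the aggregate cost $F$, whose unique fixed point by strict convexity (Assumption~\ref{ass_f(x)}) is $\overline{\mb{x}}^*$. For the disagreement direction, I would introduce a quadratic Lyapunov function $V(\mb{x},\mb{y})=\tfrac12\|\mb{x}-\mb{x}^*\|_P^2+\tfrac12\|\mb{y}\|_Q^2$ with $P,Q\succ 0$, and use the representation $g(\mb{x})=\Xi(t)\mb{x}$ together with the circle-criterion clause in Assumption~\ref{ass_g_sector} to produce a common quadratic certificate valid across the whole sector. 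The uniform Hurwitz property from the previous paragraph then yields $\dot V\leq -cV$ on every continuous-flow interval, with $c$ proportional to $\delta$.

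To lift the flow estimate to the entire hybrid horizon, observe that a switch in $q$ changes only the Laplacians and not the state $[\mb{x};\mb{y}]$, so $V$ is continuous across jumps; under the dwell-time/regularity assumption on $q$ referenced in the paper the cumulative flow decrement is not erased by jumps, so $V(t)\to 0$ and hence $[\mb{x}(t);\mb{y}(t)]\to [\mb{x}^*;\mb{0}_{nm}]$. The hardest step is engineering a single certificate $(P,Q)$ that handles simultaneously the time-varying $\Xi(t)$, the switching $q$, and the nonlinear feedback; this is precisely why Assumption~\ref{ass_g_sector} embeds the circle-criterion frequency condition, which guarantees absolute stability for the entire sector-bounded family. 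A fallback more in the perturbation spirit of the preceding derivation of $\overline{\alpha}_g$ is to bypass an explicit $(P,Q)$ and invoke Lemma~\ref{lem_dbound} directly to bound the matching distance $d(\sigma(M_g),\sigma(M_g^0))$ uniformly in $(t,q,\Xi)$, obtaining contraction on the disagreement subspace, and then concluding via the consensus-direction analysis above.
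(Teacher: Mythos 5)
Your proposal is correct in outline and follows essentially the same route as the paper: Theorem~\ref{thm_zeroeig} plus the admissible range $0<\alpha<\overline{\alpha}_g$ give the spectral picture ($m$ zero eigenvalues on the consensus subspace, the rest uniformly in the open LHP over the sector family), a quadratic Lyapunov function handles the flow, and continuity of $V$ across the switching instants together with an invariance argument closes the hybrid horizon. The differences are in the middle step. The paper does not split consensus and disagreement directions or introduce weights: it takes the unweighted $V(\boldsymbol{\delta})=\tfrac12\boldsymbol{\delta}^\top\boldsymbol{\delta}$ on the full residual $\boldsymbol{\delta}=[\mb{x};\mb{y}]-[\mb{x}^*;\mb{0}_{mn}]$, uses the invariance of $[\mb{x}^*;\mb{0}_{mn}]$ to write $\dot{\boldsymbol{\delta}}=M_g\boldsymbol{\delta}$, and then bounds $\dot V=\boldsymbol{\delta}^\top M_g\boldsymbol{\delta}\leq\max_{j}\operatorname{Re}\{\lambda_{2,j}(\alpha,t)\}\,\boldsymbol{\delta}^\top\boldsymbol{\delta}$ as in \eqref{eq_Re2}, invoking the circle criterion with $A=M^1$, $C=M_g^0$ to cover the sector. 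So the ``hardest step'' you flag --- engineering a single certificate $(P,Q)$ valid across $\Xi(t)$ and $q$ --- is dispatched in the paper simply by choosing $P=Q=I$ and appealing to the eigenvalue bound; your weighted formulation buys generality you do not actually need here, at the cost of leaving the certificate unconstructed. Be aware, though, that the step you would still have to justify in either version is the passage from ``all nonzero eigenvalues of $M_g$ have negative real part'' to ``$\boldsymbol{\delta}^\top M_g\boldsymbol{\delta}<0$ off the invariant set'': for a non-normal $M_g$ (the $H$ blocks make it non-symmetric) a quadratic form is controlled by the symmetric part of the matrix, not by its spectrum, and the paper leans on the symmetry of $\overline{W}_q,\overline{A}_q$ from Assumption~\ref{ass_W} and a contraction-theory remark to cover this; your fallback via Lemma~\ref{lem_dbound} does not by itself repair that step either, since the matching-distance bound is again a statement about eigenvalues. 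Your treatment of the consensus direction via \eqref{eq_sumxdot2} as gradient descent on $F$ is consistent with the paper's Section~\ref{sec_dyn} discussion of the invariant equilibrium, and is a reasonable way to pin down which point of the consensus subspace is reached.
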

\begin{proof}
    The proof follows from Theorems~\ref{thm_zeroeig}, circle criterion, and the admissible range of $\alpha$ for which all the eigenvalues of~$M_g$ have non-positive real-parts, $\forall t,q$, and the algebraic multiplicity of zero eigenvalue is~$m$. Note that for the circle criterion $A = M^1$, $C = M^0_g$, $\kappa \mb{v} \leq \phi(\mb{v},t) \leq \mc{K} \mb{v}$.
    Then the proof of stability follows from Lyapunov-type analysis similar to \cite{dsvm}; define the positive-definite Lyapunov function ${V(\boldsymbol{\delta}) = \frac{1}{2} \boldsymbol{\delta}^\top \boldsymbol{\delta} =  \frac{1}{2}\lVert\boldsymbol{\delta}\rVert_2^2}$ 
with residual~$\boldsymbol{\delta} = [\mb{x};\mb{y}]-[\mb{x}^*;\mb{0}_{mn}] \in \mathbb{R}^{2mn}$, i.e., the difference of the system state and the optimal state at every time.
Recall that~$[\mb{x}^*;\mb{0}_{nm}]$ is an invariant state of the proposed dynamics and 
$${\dot{\boldsymbol{\delta}} = [\dot{\mb{x}};\dot{\mb{y}}]-[\dot{\mb{x}}^*;\mb{0}_{mn}]} = {M_g ([\mb{x};\mb{y}]-[\mb{x}^*;\mb{0}_{mn}]) = M_g \boldsymbol{\delta}}$$
where $M_g [\mb{x}^*;\mb{0}_{mn}] = \mb{0}_{2mn}$.
Then, at every operating point $t>0$ the derivative satisfies
${\dot{V} = \boldsymbol{\delta}^\top \dot{\boldsymbol{\delta}}=  \boldsymbol{\delta}^\top M_g \boldsymbol{\delta}}$. Since only~${{\lambda}_{1,j}(\alpha,t)=0}$, while other eigenvalues are in LHP, i.e.,~${Re\{{\lambda}_{i,j}(\alpha,t)\}<0}$, for~${2\leq i\leq2n,1\leq j\leq m}$,
 one can show that,
\begin{eqnarray} \label{eq_Re2}
	\boldsymbol{\delta}^\top M_g \boldsymbol{\delta} \leq \max_{1\leq j\leq m}\operatorname{Re}\{{\lambda}_{2,j}(\alpha,t)\} \boldsymbol{\delta}^\top  \boldsymbol{\delta}. 
\end{eqnarray}
where $\lambda_{2,j}(\alpha,t)$ denotes the largest nonzero eigenvalue of $M_g$. Note that although $M_g$ varies in time, we proved in Theorem~\ref{thm_zeroeig} that $\operatorname{Re}\{{\lambda}_{2,j}(\alpha,t)<0\}$ for ${1\leq j\leq m}$, implying that~${\dot{V}< 0}$ for~${\delta \neq \mb{0}_{2mn}}$.
Then, similar to LaSalle’s invariance principle for hybrid dynamic systems \cite{goebel2009hybrid}, we have 
\begin{align} \label{eq_vdot}
    {\dot{V} =  0 \Leftrightarrow \boldsymbol{\delta} = \mb{0}_{2mn}}
\end{align}
and the dynamics converges to its invariant set~${\{\boldsymbol{\delta} = \mb{0}_{2mn}\}}$. This proves the theorem\footnote{One can refer to the contraction theory
to similarly prove the results as  $\overline{W}_q$, $\overline{A}_q$ are symmetric matrices (from Assumption~\ref{ass_W}) and one can show that the system trajectories converge towards each other \cite{kalman_conj}.}. 
\end{proof}

\begin{cor} \label{cor_mirror}
As a follow-up to Corollary~\ref{cor_eig}, it is possible to extend Theorem~\ref{thm_zeroeig2} to strongly-connected weight-balanced (WB) directed graphs by using~\cite[Theorem~8]{SensNets:Olfati04} (see Courant-Fischer Theorem on balanced mirror digraphs \cite{SensNets:Olfati04}). This is one of our ongoing research directions and is only shown via simulation in Section~\ref{sec_sim}.
\end{cor}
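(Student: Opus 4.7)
The plan is to lift each ingredient of the undirected argument to the balanced directed setting, substituting the mirror digraph wherever symmetry of $\overline{W}_q$ and $\overline{A}_q$ was essential. I would first observe that Lemmas~\ref{lem_sc}--\ref{lem_laplacian} are already formulated for WB graphs and extend verbatim to strongly-connected balanced digraphs by the classical results of \cite{SensNets:Olfati04}: $\mb{1}_n$ remains simultaneously the left and right null vector of $\overline{W}_q$, and the remaining eigenvalues of $\overline{W}_q$ lie strictly in the open LHP (though now possibly complex). This is why the proof of Theorem~\ref{thm_zeroeig} was already written in terms of $\operatorname{Re}\{\lambda_{i,j}\}$ rather than assuming real spectra.

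Second, I would revisit the perturbation step in Theorem~\ref{thm_zeroeig}. The only structural facts used in the computation of $V^\top M_1 V_g$ in~\eqref{eq_dmalpha} are that $\mb{1}_n \otimes I_m$ spans the left and right kernels of $\overline{W}_q \otimes I_m$ and $\overline{A}_q \otimes I_m$; by the first step this still holds, so Lemma~\ref{lem_dM} applies unchanged and produces $\partial_{\alpha} \lambda_{1,j}|_{\alpha=0}=0$ and $\partial_{\alpha} \lambda_{2,j}|_{\alpha=0}<0$. Likewise, the step-size bounds in~\eqref{eq_alphabar1}--\eqref{eq_alphag_spec_norm} rest only on matrix norms and on $\underline{\lambda}=\min_j|\operatorname{Re}\{\lambda_{3,j}\}|$, both of which remain well-defined for balanced digraphs. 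Corollary~\ref{cor_eig} ensures that the same sign conditions on the eigenvector-Hessian quadratic form persist as long as $\sum_i v_i v_{g,i}>0$, which is automatic here since $V=V_g=\tfrac{1}{\sqrt{n}}(\mb{1}_n\otimes I_m)$ is still the relevant pair.

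The main hurdle, and the reason the result is currently deferred, is the Lyapunov inequality~\eqref{eq_Re2} in the proof of Theorem~\ref{thm_zeroeig2}: for a symmetric $M_g$ one passes from the spectrum of $M_g$ to the quadratic form $\boldsymbol{\delta}^\top M_g \boldsymbol{\delta}$ via a unitary diagonalization, but for a non-normal $M_g$ this link fails, and $\boldsymbol{\delta}^\top M_g \boldsymbol{\delta}$ is governed by $\mathrm{Sym}(M_g) = \tfrac{1}{2}(M_g+M_g^\top)$ rather than by $\sigma(M_g)$. Here is precisely where \cite[Theorem~8]{SensNets:Olfati04} enters: for any WB digraph, $\mb{x}^\top \overline{W}_q \mb{x} = \mb{x}^\top \mathrm{Sym}(\overline{W}_q)\mb{x} = \mb{x}^\top L(\hat{\mc{G}}_q)\mb{x}$, where $\hat{\mc{G}}_q$ is the undirected mirror of $\mc{G}_q$, which is connected because $\mc{G}_q$ is strongly connected. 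Courant--Fischer then yields $\mb{x}^\top L(\hat{\mc{G}}_q)\mb{x} \leq -\lambda_2(\hat{\mc{G}}_q)\|\mb{x}\|^2$ for $\mb{x}\perp\mb{1}_n$, and applying this block-wise to $\mathrm{Sym}(M_g)$ on the consensus-orthogonal subspace (where $\boldsymbol{\delta}$ lives, since $[\mb{x}^*;\mb{0}_{nm}]$ is invariant) produces $\dot V = \boldsymbol{\delta}^\top \mathrm{Sym}(M_g)\boldsymbol{\delta}<0$, replacing $\max_j\operatorname{Re}\{\lambda_{2,j}\}$ by $-\lambda_2(\hat{\mc{G}}_q)$ in the decay rate.

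Closing the argument then proceeds exactly as in Theorem~\ref{thm_zeroeig2}: the hybrid LaSalle invariance principle~\cite{goebel2009hybrid} reduces the $\omega$-limit set to $\{\boldsymbol{\delta}=\mb{0}_{2mn}\}$, provided $\inf_q \lambda_2(\hat{\mc{G}}_q)>0$, which is the natural uniform connectivity hypothesis under finite image of the jump map $q$. I expect the hardest technicality to be upgrading the scalar step-size bound~\eqref{eq_alphag_spec_norm1}: its derivation factored the pencil $(\overline{W}_{q,\Xi}-\lambda I_n)^2 + \alpha\lambda H=0$ as $(\overline{W}_{q,\Xi}-\lambda I_n \pm \sqrt{\alpha|\lambda|H})$, an operation that preserves the spectrum only when $\overline{W}_{q,\Xi}$ is normal. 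For non-normal balanced Laplacians one must instead invoke a Bauer--Fike or pseudospectral perturbation bound, which introduces an extra factor involving the condition number of an eigenbasis of $\overline{W}_q$; with that substitution the remainder of the proof is mechanical, and the admissible range of $\alpha$ shrinks smoothly as the digraph becomes more asymmetric.
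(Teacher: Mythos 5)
The paper itself offers no proof of Corollary~\ref{cor_mirror}: it is explicitly deferred as ongoing research and supported only by the simulation in Section~\ref{sec_sim}, with the text merely gesturing at \cite[Theorem~8]{SensNets:Olfati04} and the mirror-digraph version of Courant--Fischer. Your proposal therefore goes well beyond the paper, and it does follow exactly the route the authors hint at: you correctly note that the perturbation analysis in Theorem~\ref{thm_zeroeig} already accommodates complex spectra, and you correctly identify the genuine obstruction as the passage from $\sigma(M_g)$ to the quadratic form $\boldsymbol{\delta}^\top M_g \boldsymbol{\delta}$ in~\eqref{eq_Re2}, which is governed by $\mathrm{Sym}(M_g)$ once $M_g$ is non-normal. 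Diagnosing that the scalar factorization behind~\eqref{eq_alphag_spec_norm1} also silently uses normality is a good catch that the paper does not make.

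However, there is a concrete gap in your key step. You apply Courant--Fischer ``block-wise to $\mathrm{Sym}(M_g)$'' as if the mirror-Laplacian bounds on $\mathrm{Sym}(\overline{W}_{q,\Xi}\otimes I_m)$ and $\mathrm{Sym}(\overline{A}_{q,\Xi}\otimes I_m)$ were all that mattered, but $\mathrm{Sym}(M_g)$ is not block-diagonal: its off-diagonal block is $\tfrac{1}{2}\bigl((\overline{W}_{q,\Xi}\otimes I_m)^\top H - \alpha I_{mn}\bigr)$, which does not vanish on the consensus-orthogonal subspace. Negativity of $\boldsymbol{\delta}^\top \mathrm{Sym}(M_g)\boldsymbol{\delta}$ therefore does not follow from the two diagonal Laplacian blocks alone; you need an additional argument (e.g., a Schur-complement condition or a small-$\alpha$ domination estimate) showing that the $\mb{x}$--$\mb{y}$ cross terms are controlled by the negative-definite contributions of the mirror Laplacians and of $-\alpha H$, and this is precisely where a new restriction on $\alpha$ (beyond~\eqref{eq_alphag_spec_norm}--\eqref{eq_alphag_spec_norm1}) would have to emerge. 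Note also that this difficulty is not created by directedness: $M_g$ is already non-normal in the undirected case because of its block structure, so the same repair is needed to make~\eqref{eq_Re2} rigorous there. Until the cross-term issue is closed, the proposal is a plausible and well-aimed sketch rather than a proof, which is consistent with the authors leaving the directed case open.
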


Recall that, in the above proof, the Lyapunov function $V$ remains continuous at the jump (switching) points under the proposed hybrid dynamics.
The nonlinear mapping $g(\cdot)$ in \eqref{eq_xdot_g} and \eqref{eq_ydot_g} can be different in general. In the case of 
different sector bound parameters $\kappa_1,\mc{K}_1$ and  $\kappa_2,\mc{K}_2$ the bound $\alpha_g$ can be found via similar calculations by considering $\min \{\kappa_1,\kappa_2\}$ and $\max \{\mc{K}_1,\mc{K}_2\}$ instead.

\begin{rem}
    The weight-symmetric of networks (and WB condition in general) is a milder condition than bi-stochasticity \cite{icrom22_drop}. This is because, in case of link removal and packet drops, e.g., in unreliable networks, the weight-symmetric condition may still hold but bi-stochasticity is not necessarily satisfied. In other words, bi-stochastic weights in volatile networks need to be redesigned via specific algorithms to ensure that the weights of incoming and outgoing links over the new network still sum to one, e.g., via the algorithms proposed in \cite{6426252,cons_drop_siam}. This paper on the other hand needs only symmetric weights and still remains weight-symmetric in case of removing any bidirectional link. This is an improvement over the existing literature.
\end{rem}

\section{Simulations} \label{sec_sim}
For the simulation, we recall the same setup in \cite{dsvm}  based on the example given in~\cite{russell2010artificial}. Consider~${N=200}$ uniformly distributed sample data points ${\boldsymbol{\chi}_i=[{\chi}_i(1);{\chi}_i(2)]}$ as shown in Fig.~\ref{fig_data}(Left), classified as blue *'s and  red~o's. In~$\mathbb{R}^2$, no linear SVM classifier can be defined, and applying nonlinear mapping~$\phi(\boldsymbol{\chi}_i) = [{\chi}_i(1)^2;{\chi}_i(2)^2;\sqrt{2}{\chi}_i(1){\chi}_i(2)]$ (with kernel
$K(\boldsymbol{\chi}_i,\boldsymbol{\chi}_j)=(\phi(\boldsymbol{\chi}_i)^\top \phi(\boldsymbol{\chi}_j))^2$) the data points can be linearly classified in~$\mathbb{R}^3$ ($m=4$), see Fig.~\ref{fig_data}(Right). 
\begin{figure}[tbp]
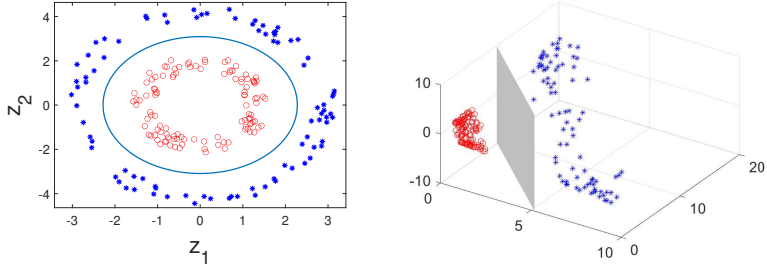

	\centering
	\includegraphics[width=2in]{fig_2d.pdf}
	\includegraphics[width=2in]{fig_3d.pdf}	
	\caption{(Left) This figure shows the training data and the optimal nonlinear classifier (Left) in 2D as the ellipse and  (Right) (mapped) in 3D space as a linear hyperplane. Our nonlinear D-SVM under logarithmic quantization optimally classifies the data via the grey hyperplane. This hyperplane, inversely mapped back into 2D, represents the ellipse in the left figure.}  \label{fig_data}
\end{figure}
For the purpose of the simulation, we assume nonlinear log quantization at the links in the proposed dynamics~\eqref{eq_xdot_g}-\eqref{eq_ydot_g}, i.e.,  
$$g(z)= q_l(z) = \mbox{sgn}(z)\exp\left(\rho\left[\dfrac{\log(|z|)}{\rho}\right] \right)$$
 with $[\cdot]$ as rounding to the nearest integer, $\mbox{sgn}(\cdot)$  as the sign function, and $\rho$ as the quantization level. This nonlinearity satisfies Assumption~\ref{ass_g_sector}.
The network $\mc{G}_q$ of~${n=5}$ agents is considered as a dynamic $2$-hop WB digraph such that, using MATLAB's \texttt{randperm}, the nodes' permutation (and neighbourhood) changes every $0.001$ sec.
Every agent finds the optimal hyperplane parameters~$\mb{x}_i = [\boldsymbol{\omega}_i^\top;\nu_i]$ ($\boldsymbol{\omega}_i \in \mathbb{R}^3$) to minimize its local loss function defined by~\eqref{eq_svm_dist} and, then, shares~$\mb{x}_i$ and $\mb{y}_i$ over $\mc{G}_q$.  The simulation parameters are as follows: ${\alpha=6}$,  $\mu=2$, $C=1$, $\rho=1$ with initialization $\mb{y}(0) = \mb{0}_{20}$ and $\mb{x}(0) = \texttt{rand(20,1)}$.
The time-evolution of~$\mb{x}_i  = [\boldsymbol{\omega}_i^\top;\nu_i] \in \mathbb{R}^4$, loss function~$F(\mb{x})$, and sum of the gradients~$\sum_{i=1}^{5} \nabla f_i(\mb{x}_i) \in \mathbb{R}^4$ are given in  Fig.~\ref{fig_nonlin}. 

\begin{figure}
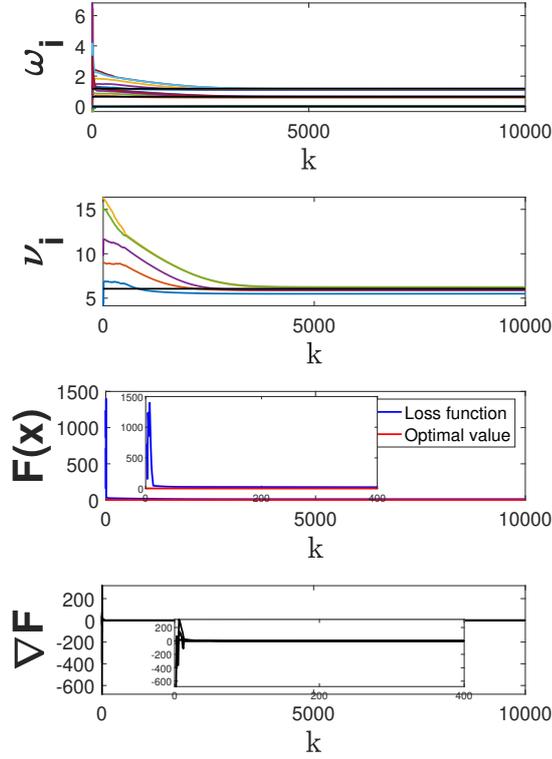
 
	\centering
 	\includegraphics[width=3in]{fig_w_nonlin.pdf} 
 	\includegraphics[width=3in]{fig_v_nonlin.pdf}	
	\includegraphics[width=3in]{fig_F_nonlin.pdf}	
	\includegraphics[width=3in]{fig_dF_nonlin.pdf}
	\caption{The time-evolution of the parameters ~${\omega}_i$ and~$\nu_i$ under the nonlinear D-SVM classifier dynamics~\eqref{eq_xdot_g}-\eqref{eq_ydot_g} subject to log quantization. The overall loss function~$F(\mb{x})$, the optimal values based on the centralized SVM, and the sum of the local gradients~$\sum_{i=1}^{5} \nabla f_i(\mb{x}_i)$ are also shown. }  \label{fig_nonlin} \vspace{-0.5cm}
\end{figure}
The agents reach consensus on the optimal value ${\overline{\mb{x}}^*=[\overline{\omega}(1),\overline{\omega}(2),\overline{\omega}(3),\overline{\nu}]^\top}$,
which represents the separating ellipse  ${\overline{\omega}(1)z_1^2 + \overline{\omega}(2)z_2^2 -\overline{\nu}=0}$ ($z_1$ and~$z_2$ as the Cartesian coordinates in~$\mathbb{R}^2$).
For comparison, D-SVM under linear setup \cite{dsvm} is also given in Fig.~\ref{fig_lin}. It is clear that the nonlinearity only affects the agents' dynamics and not the overall Laplacian-gradient tracking performance given by dynamics \eqref{eq_sumydot}-\eqref{eq_sumxdot2}.  
\begin{figure}
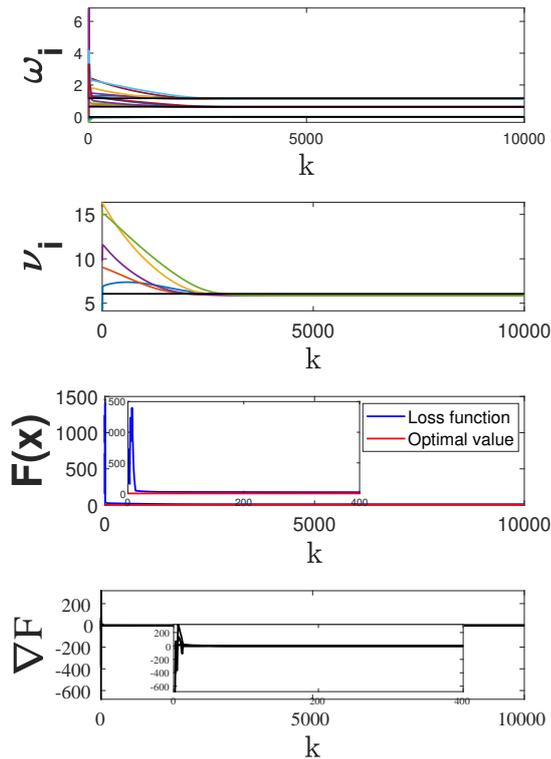
 
	\centering
 	\includegraphics[width=3in]{fig_w_lin.pdf} 
 	\includegraphics[width=3in]{fig_v_lin.pdf}	
 	\includegraphics[width=3in]{fig_F_lin.pdf}	
	\includegraphics[width=3in]{fig_dF_lin.pdf}
	\caption{The time-evolution of the linear D-SVM classifier under dynamics~\eqref{eq_xdot}-\eqref{eq_ydot} (under the same parameters as in Fig.~\ref{fig_nonlin}) for comparison.}  \label{fig_lin} \vspace{-0.2cm}
\end{figure}
We used MATLAB R2021b for the simulation and \texttt{CVX} toolbox for centralized SVM (for comparison). As clear from the figures, the distributed solutions converge to the centralized optimal solution in both linear and quantized cases for admissible $\alpha,\rho$ values.
Although, since the loss function \eqref{eq_svm_dist} grows exponentially, this results in some inaccuracy in MATLAB calculations and minor residual in the figures.

Next, we perform some sensitivity analysis on $\alpha_g$  by adding an extra link to reduce the network diameter to change $\sigma(M_0^g)$ and, also, by changing the sector-bound-ratios by tuning the level $\rho$ in log quantization since  $\kappa = 1-\dfrac{\rho}{2} \leq \dfrac{q_l(z)}{z}\leq 1+\dfrac{\rho}{2} = \mc{K}$ (Assumption~\ref{ass_g_sector}). We also tuned the network connectivity by changing the number of neighbours by linking the $2$-hop and $3$-hop neighbouring nodes.  The results are shown in Fig.~\ref{fig_sensit} for $\alpha = 0.2$ and discrete time-step $\eta = 0.05$. Note that, from the figure, large values of $\frac{\mc{K}}{\kappa}$ (Table~\ref{tab_comp_K}) and $\frac{|\lambda_{2mn}|}{\underline{\lambda}}$ (Table~\ref{tab_comp_eig}) reduces the admissible range of $\alpha$ and makes the solution unstable for the given $\alpha$.  
\begin{figure}
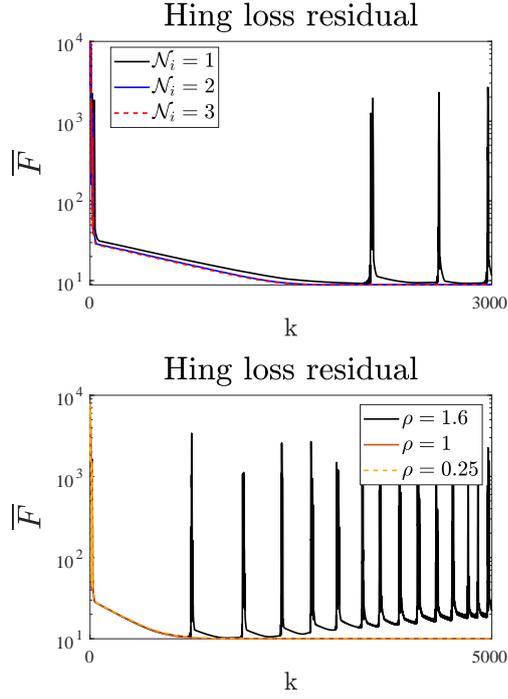
 
	\centering
	\includegraphics[width=2.75in]{fig_sensit_neighb.pdf}
	\includegraphics[width=2.75in]{fig_sensit_epsil.pdf} 
	\caption{This figure shows how the change in network parameter $\mc{N}_i$ (and its eigen-spectrum) and the sector-bounds (the quantization parameter $\rho$) on the nonlinear mapping affect the stability of the dynamics~\eqref{eq_xdot_g}-\eqref{eq_ydot_g}. }  \label{fig_sensit} \vspace{-0.5cm}
\end{figure}
\begin{table} [h] 
		\centering
		\caption{Change in the sector-bound-ratio by changing the log quantization level $\rho$}
		\label{tab_comp_K}
		\begin{tabular}{|c|c|c|c|c|}
			\hline
		   Sector-bound-ratio & $\rho=1.6$  & $\rho=1$ & $\rho=0.25$ \\
			\hline
			$\frac{\mc{K}}{\kappa}$ & 9 & 3 &  1.28  \\
			\hline
			\hline
		\end{tabular}
\end{table}
\begin{table} [h] 
		\centering
		\caption{Change in the eigen-ratio by changing the network connectivity (number of $\mc{N}_i$-hop neighbours) }
		\label{tab_comp_eig}
		\begin{tabular}{|c|c|c|c|c|}
			\hline
		   Eigen-ratio & $\mc{N}_i=1$  & $\mc{N}_i=2$ & $\mc{N}_i=3$ \\
			\hline
			$\frac{\mbox{Max}\{|\lambda_{2mn}(t)|\}}{\underline{\lambda}}$ & 13733 & 2971 &  1413  \\
			\hline
			$\frac{\mbox{Mean}\{|\lambda_{2mn}(t)|\}}{\underline{\lambda}}$ & 328 & 65 &  33  \\
			\hline
			\hline
		\end{tabular}
\end{table}


\section{Discussions and Future Directions} \label{sec_con}
\subsection{Concluding Remarks}
The consensus optimization algorithm in this work is shown to converge under strongly sign-preserving odd nonlinearities.  
Note that ``non-strongly'' sign-preserving nonlinearities may cause steady-state residuals as discussed in some recent works. For example, for general non-quadratic optimization subject to uniform quantization the existing methods only guarantee convergence to the $\varepsilon$-neighborhood of the exact optimizer $\mb{x}^*$, i.e., the algorithm may result in steady-state residual bounded by $\varepsilon$ \cite{magnusson2018communication}. As an example consider uniform quantization defined as $g(z)=\rho\left[\dfrac{z}{\rho}\right]$. This nonlinear function although is sign-preserving, is not ``strongly'' sign-preserving as $\frac{g(z)}{z}=0$ for $-\frac{\rho}{2} < z < \frac{\rho}{2}$. This implies that in the proof of Theorem~\ref{thm_zeroeig2} the Eq. \eqref{eq_vdot} does not necessarily hold (invariant set includes an $\varepsilon$-neighborhood of $\delta = \mb{0}$) and exact convergence is not guaranteed. Finding this $\varepsilon$ bound on the optimizer as a function of quantization level $\rho$ and loss function parameters is an interesting direction for future research. Recall that the logarithmic quantizer, on the other hand, is ``strongly'' sign-preserving and sector-bounded (satisfying Assumption~\ref{ass_g_sector}). Thus, one can prove its convergence under \eqref{eq_xdot_g}-\eqref{eq_ydot_g} as shown in our simulations. Note that, although locally non-Lipschitz, log quantization satisfies Assumption~\ref{ass_g_sector} with no chattering effect (in contrast to \cite{rahili_ren,ning2017distributed,taes2020finite}). The results can be further extended to optimize the loss function (least mean-square-error problem) in distributed estimation \cite{doostmohammadian2021distributed,my_acc} and target tracking \cite{doostmohammadian2022linear}.

\subsection{Future Research}
Generalizing Assumption~\ref{ass_g_sector} for less constrained link nonlinearity condition,  extending Corollary~\ref{cor_mirror} (and Assumption~\ref{ass_W}), and convergence under data-transmission delays over the links are our future research directions.

\section*{Statements \& Declarations}

The authors would like to thank Wei Jiang, Evagoras Makridis,  Muwahida Liaquat, Themistoklis Charalambous, and Usman Khan for their comments.

The authors declare that no funds, grants, or other support were received during the preparation of this manuscript.

The authors have no relevant financial or non-financial (or other competing) interests to disclose.

All authors contributed to the study conception, preparation, and writing. 
All authors read and approved the final manuscript.

This research involves NO human or animal subjects. 

Data sharing not applicable to this article as no datasets were generated or analysed during the current study.

\bibliographystyle{spmpsci} 
\bibliography{bibliography}

\end{document}